\documentclass[12pt,reqno,a4paper]{amsart}
\usepackage{amsmath,amssymb,amsfonts,amscd}
\usepackage[mathscr]{eucal}
\usepackage{hyperref}
\usepackage[dvipsnames]{xcolor} 
\usepackage{pgf,tikz,pgfplots}
\pgfplotsset{compat=1.14}
\usepackage{mathrsfs}
\usepackage{subcaption} 
\usetikzlibrary{arrows}
\topmargin=0cm 
\evensidemargin=0cm 
\usepackage[margin=1in]{geometry}

\newtheorem{theorem}{Theorem}

\theoremstyle{definition}
\newtheorem{definition}{Definition}
\newtheorem{example}{Example}

\newcommand{\N}{\mathbb{N}}

\newcommand{\p}{{\partial}}

\newcommand{\vecc}[1]{\mathbf{#1}}

\newcommand{\cL}{\mathcal{L}}
\newcommand{\cC}{\mathcal{C}}
\newcommand{\cN}{\mathcal{N}}


\title[Laplace invariants]{Laplace invariants of differential operators
}
\author{D.~Hobby}
\address{Department of Mathematics, SUNY New Paltz, New Paltz, NY, USA}
\email{hobbyd@newpaltz.edu}
\author{E.~Shemyakova}
\address{Department of Mathematics and Statistics, University of Toledo, Toledo,  OH, USA}
\email{ekaterina.shemyakova@utoledo.edu}


\begin{document}
\begin{abstract} 
We identify conditions giving 
large natural classes of partial differential operators for which it is possible to construct a complete set of Laplace invariants. In order to do that we investigate general properties of differential invariants of partial differential operators under gauge transformations and  introduce a sufficient condition for a set of invariants to be complete. We also give a some mild conditions that guarantee the existence of such a set. The proof is constructive. The method gives many examples of invariants  previously known in the literature as well as many new examples including multidimensional. 
\end{abstract}

\maketitle
\section{Introduction}

Gauge transformations ($\varphi \mapsto e^{g} \varphi$) of differential operators are important transformations that
preserve algebraic structure of an operator, such as 
its ``factorizability'' 
into factors of some fixed form 
or 
existence of Darboux transformations. Invariant properties like these are best described by gauge invariants (algebraic expressions in the coefficients of the operator and their derivatives). 


The first examples of gauge invariants for differential operators are the Laplace invariants of the hyperbolic second order operator $L = \p_{xy} + a \p_x + b \p_y + c$~\cite{Darboux2}. These are the gauge invariants $h$ and $k$ that can be thought of as derived from \emph{incomplete factorizations} of this operator  as $L = (\p_x + b)(\p_y + a) + h$ and $L = (\p_y + a)(\p_x + b) + k$. These invariants uniquely define the gauge class of the operator, and so are a \emph{complete set of invariants}. 

This is the starting point of a method for solution of $Lu=0$ for the above operator in the closed form. Two Laplace transformations $L \mapsto L_1$ and $L \mapsto L_{-1}$ are defined by intertwining relations  $N_1L=L_1(\p_x + b)$ and $N_{-1}L=L_{-1}(\p_y + a)$. Each of the transformations swaps the values of $h$ and $k$ and then changes one of them. In addition, the two Laplace transformations are (up to the gauge equivalence class) inverses of each other. So as the result of consecutive application of Laplace transformations to some operator $L$ we have a chain of the corresponding pairs of invariants (not a lattice as may be expected). This ``Laplace  chain" is finite if one of the invariants is zero at some point in each direction of the chain. This corresponds to factorizability of the transformed operator. The original equation $Lu=0$ then can be solved in closed form invoking the invertibility of Laplace transformations. 
 
Laplace transformations are members of a larger group of transformations --- Darboux transformations --- which can be defined   algebraically by the means of an intertwining relation $NL=L_1M$. For operators $L = \p_{xy} + a \p_x + b \p_y + c$ it was proved~\cite{shem:darboux2,shemyakova2013_DT_fact} (and then a discrete and a semi-discrete analogues of this result was proved
by S.~Smirnov~\cite{Smirnov2018}) that Laplace transformations are the only invertible Darboux transformations and all the others, even corresponding to a higher order operator $M$, are not.
These non-invertible Darboux transformations induce a map of  kernels $\ker L \rightarrow \ker L_1$ which is not monomorphic, so some solutions are lost. A new construction of invertible Darboux transformations for a large class of operators was discovered in~\cite{2013:invertible:darboux} and for even larger class in~\cite{shemya:hobby2016:iterated}. Another method using pseudodifferential operators was proposed in~\cite{ganzha2013intertwining}.
Multidimensional Darboux transformations are proposed by G.~Hovhannisyan et al.~\cite{HOVHANNISYAN20161690,HOVHANNISYAN2018776}. 
Complete classification of Darboux transformations on the superline (operators of arbitrary order and Darboux transformations of arbitrary order) was obtained in \cite{2015:super,shemya:voronov2016:berezinians}. Note that with every manifold one can naturally  associate a commutative algebra consisting of formal sums of densities of arbitrary real weights. It is useful for geometric analysis of differential operators. In~\cite{ShemyakovaVoronov_Sturm_Liou_densities2018}, we studied factorization of differential operators on such algebra in the case of the line, with an eye at extending Darboux transformations theory to them.

Gauge invariants can be found using \emph{regularized moving frames} method of M.~Fels and P.~Olver~\cite{FO1,FO2}, see also E.~Mansfield's book~\cite{Mansf_book}, which was developed later also for pseudo-groups by P.~Olver and J.~Pohjanpelto~\cite{OP:09}. They also proved that the algebra of invariants can be generated by a finite number of invariants and a finite number of invariant derivatives  (which are particular invariant differential operators on the algebra of invariants). 
The specifics of the use of the regularized moving frames method for gauge invariants of differential operators is described by the second author with E.~Mansfield in~\cite{movingframes}.

Concerning Laplace invariants for differential operators the following results are known.
Dzhokhadze's 2004~\cite{dzhokhadze2004invariants} and Mironov's 2009~\cite{mironov2009invariants} for
    4th order operators and  Ch.~Athorne and H.~Yilmaz's 2016~\cite{AthorneYilmaz2016} for arbitrary order operators of the  
    form   $\sum_{|\vecc{v}|=0}^d
   \left(
        \sum_{\forall i,j, v_i \neq v_j } {a_\vecc{v}} \p^{\vecc{v}} 
        \right)$,
where $d$ is the order of the operator. Thus the order of the operator cannot be larger than the number of the independent variables available. For example, in bivariate case the highest possible order is two and such operators have form $\p_{x} \p_y + a_1 \p_x + a_2 \p_y +a_3$;
        for dimension three the highest possible order is three and such operators have form $\p_{x} \p_y \p_z + a_1 \p_x \p_y + a_2 \p_y \p_z + a_3 \p_x \p_z + a_4 \p_x + a_5 \p_y + a_6 \p_z + a_7$. 
        Ch.~Athorne and H.~Yilmaz's 2016~\cite{AthorneYilmaz2016} found {some} Laplace invariants for such operators of arbitrary order and of arbitrary dimension.
        Afterwards they constructed and investigated the corresponding Darboux (Laplace) transformations~\cite{Athorne2018,AthorneYilmaz2019}. 
        
 In 2007, the second author with F.~Winkler~\cite{obstacle2} proposed an algebraic structure, a \emph{ring of obstacles}, where 
   the remainders of incomplete factorizations for operators of arbitrary order and arbitrary number of variables become invariants. The method gave in particular Laplace invariants for bivariate operators with principal symbols $ (p\p_x + q \p_y)\p_x \p_y$, $\p_x^2 \p_y$, and $\p_x^3$. The  Laplace invariants set given by this method is not complete; however, we managed to find an extra  (``non-Laplace'') invariant for each case making the resulting sets complete~\cite{invariants_gen,inv_cond_hyper_case,inv_cond_non_hyper_case}. 

   M.~van Hoijer with students and collaborators, see e.g.,~ \cite{VanHoeij_transformations2017},
   works on the solution methods for linear homogeneous ordinary differential equations with rational function or polynomial coefficients. Such is for example,   the problem of hypergeometric solutions. In~ \cite{VanHoeij_transformations2017}
   and other works, the authors use gauge transformations (they are called there \emph{exponential transformations}) and construct  Darboux transformations and the corresponding Laplace invariants and use them to simplify the equations.
  
Note that there is difference between finding a ring of invariants as  specified by some arbitrary choice of a generating set  and finding   a ``distinguished" generating set whose elements can carry extra information. (The reader can have in mind classical examples of distinguished invariants in differential geometry such as curvature or torsion, or e.g. particular characteristic classes such as   Chern classes, etc.) 
In the literature,    Laplace invariants  typically mean gauge invariants distinguished in this way. Unlike their classical prototype, they   cannot be obtained by a direct generalization of the Laplace method as ``remainders'' of incomplete factorizations: it is known~\cite{obstacle2} that  such
``remainders'' 
are not invariants for a general operator. (If this remainder is an operator, then even its principal symbol is not invariant in the general situation.) Nevertheless, Laplace type invariants are distinguished in the sense
that they control representability of an operator in some ``generalized'' factorized form as we show here. 
   
In the present paper, the main results are contained in  Theorem~\ref{sufficient_for_a_complete_set_theorem} and Theorem~\ref{main_theorem} which together show 
that under certain rather general and natural conditions an operator has a complete set of Laplace invariants.

The proofs are constructive and provide a general method of constructing complete sets of Laplace invariants for a very large class of operators which include previously considered classes. We show that examples of Laplace invariants existing in the literature can   be obtained by our method, and we also have examples with new types of operators.

The paper is organized as follows.
After preliminaries in Sec.~\ref{sec:preliminaries}, in Sec.~\ref{sec:complete} we define \emph{maximally generated} and \emph{approximately flat} classes of operators, which impose some natural restrictions on operators (can be multidimensional and of arbitrary high order). From the perspective 
of our method, known Laplace invariants can be classified into
four types (classification may be incomplete but we do not need that  here), we call them \emph{maximal}, \emph{extra}, \emph{compatibility}, and 
\emph{upward} invariants. We illustrate them with examples. For these classes of operators we prove Theorem~\ref{sufficient_for_a_complete_set_theorem} that if one has enough number of invariants of each type, then the set of invariants
is complete. In Sec.~\ref{sec:method}
we introduce the method, first by demonstrating it on known and new examples. Informally, a complete set of gauge invariants obtained by any classical method consists of ``nice/short formula'' low degree invariants and of some ``huge formula'' ones of high degrees. The proposed method replaces ``huge'' ones with other ``huge'' that are now associated with some generalized ``incomplete factorization'' of an operator, so now they have structure and meaning. These are what are known as \emph{Laplace invariants} in the literature. In Sec.~\ref{sec:proof} we give a  theoretical justification of the method and prove the main result, Theorem~\ref{main_theorem},  that 
under some natural conditions 
(\emph{maximally generated}, \emph{framed}, \emph{approximately flat}) an operator has a complete set of (Laplace) invariants. The proof is constructive and is essentially the method that is illustrated by  examples in Sec.~\ref{sec:complete}.
\section{Preliminaries}
\label{sec:preliminaries}
Let $K$ be arbitrary commutative differential field of characteristic zero with commuting derivations $\p_1, \dots, \p_n$. 
We consider $K$ to be differentially closed. 
We denote by $\mathcal{D}(K)$ the corresponding algebra of differential operators over $K$. 
For any integral vector
$\vecc{v}=(v_1,\dots,v_n) \in \N^n_0$~\footnote{$\N_0$ is the set of natural numbers with zero}
we write $\p^v$ for the differential monomial $\p_{1}^{v_1} \dots \p_{n}^{v_n}$.
Many concepts below are conveniently expressed if we treat the usual multi-indices used to denote derivatives as vectors. In particular we will be using the standard basis vectors $\vecc{e}_i = (0, \dots, 1, \dots 0)$.


There is a natural partial order on these vectors, where we write $\vecc{u} \preceq \vecc{v}$ iff every entry of $\vecc{u}$ is less than or equal to the corresponding entry of $\vecc{v}$. 
If $\vecc{u} \preceq \vecc{v}$ and $\vecc{u} \neq \vecc{v}$, we say that $\vecc{u}$ is {\em below} $\vecc{v}$, or that $\vecc{v}$ is {\em above} $\vecc{u}$. We extend this terminology to terms of an operator $L \in \mathcal{D}(K)$, saying that $a_{\vecc{u}} \p^{\vecc{u}}$ is below $a_{\vecc{v}} \p^{\vecc{v}}$ iff $\vecc{u}$ is below $\vecc{v}$.
We define the {\em order} of $\vecc{v}$ as the number 
$\sum_{i=1}^n v_i $, and likewise extend this terminology to ``differential monomials''.  
Thus a constant term is of order $0$, a term with a single derivative is of order $1$, and so on. 

The \emph{principal symbol} of an operator is the sum of the highest-order terms. We also define the {\emph{leading part}} of operator $L=\sum a_v \p^v$ as 
\begin{equation}
 Lead_{\preceq} (L) = \sum_{\p^\vecc{v} \in \max_{\preceq}(L)} a_\vecc{v} \p^{\vecc{v}}
  \, ,
\end{equation}
where $\max_{\preceq}(L)$ denotes the set of all
differential monomials $\p^{\vecc{v}}$ of $L$
that are below no other differential monomial in $L$.
Note that the notions of principal symbol and the leading part are not the same. We will call elements of 
$\max_{\preceq}(L)$
{\em maximal}.
We call a vector $\vecc{v}$ {\em maximal} iff the differential monomial $\p^{\vecc{v}}$ is maximal.

\begin{example} Let $L = \p_{xx} + a \p_x + b \p_{y} + c$. Then $\p_{xx}$ is its principal symbol, and $\p_{xx} + b \p_y$ is its leading part.
\end{example}

\begin{definition}
The {\em down set} of a set of terms $T$, written $\downarrow T$, is the set consisting of $T$ together with all terms that are below any term in $T$ relative to $\preceq$.  A set of terms is {\em downward closed} iff it is equal to its own down set.  Given terms $s$ and $t$ with $s \prec t$ but where there is no term $t'$ with $s \prec t' \prec t$, we say that $t$ {\em covers} $s$.  A term that is covered by a maximal term, and covered by nothing but maximal terms, will be called {\em submaximal}.
Definitions in the literature of when an element of a partial order is submaximal vary, but the above definition is best for this paper.
\end{definition}

\begin{example}
The down set of 
$\p_x \p_y$ is $\{\p_x \p_y, \p_x,\p_y,1 \}$, and the set of terms 
$\{\p_x \p_y, \p_y,1\}$ is not downward closed. 
\end{example}
We will be looking at invariants for operators, or more precisely, for invariants of classes of operators with a given set of maximal terms.  Since the gauge transformation of a term with vector $\vecc{v}$ usually contains terms with every vector below $\vecc{v}$, we restrict our investigation to classes of operators with sets of terms that are downward closed.
So given a set of terms $T$, which may possibly have arbitrary coefficients, we let $\mathcal{L}$ be the set of all operators with terms in the downward closure of $T$.  

\begin{definition}
Let a set of terms $T$ be given, where none is below any of the others in $\preceq$. 
Let $\mathcal{L}(T)$ be the set of all operators $L$ which have $T$ as their sets of maximal terms. 
A set of operators of this form will be said to be {\em generated by its maximal terms}, or {\em maximally generated}.
\end{definition}

 Let $\mathcal{L}$ be a set of operators that is generated by its maximal terms. 
 Observe that the set of terms that may appear in operators in $\mathcal{L}$ is downward closed, and that $\mathcal{L}$ is closed under gauge transformations.  
 Our problem will be to obtain differential invariants for $\mathcal{L}$.

Given a set $\mathcal{L}$ that is closed under gauge transformations, an expression $I$ in terms of the coefficients of terms of members of $\mathcal{L}$ and their derivatives is a (differential) {\em invariant} of $\mathcal{L}$ iff all elements of $\mathcal{L}$ that are related by a gauge transformation have the same value of $I$.
\section{Complete sets of invariants}
\label{sec:complete}
In this section our goal is to give some sufficient condition for a finite set of invariants to be complete. (In the next section we will turn this into a constructive method.)

A {\em complete} set of invariants is such that whenever two operators agree on them, there is a gauge transformation that relates them:
if $\mathcal{L}$ is a set of operators that is closed under gauge transformations, then the set of invariants $\{ I_1, I_2, \dots I_k \}$ is {\em complete} iff every invariant in the set is equal for two operators $L$ and $L'$, there is a function $g \in K$ so that $L' = e^{-g} L e^g$.

We need to distinguish between various kinds of invariants for 
a maximally generated class of operators $\cL$.
\begin{definition}
Let the class of operators $\cL$ be maximally generated, with set of maximal terms $T$.  The coefficients of maximal terms are (trivial) invariants for $\cL$;  we call these invariants {\em maximal}.
\end{definition}

Recall that we are treating multi-indices as vectors.
\begin{definition}
For any term $a_\vecc{v} \p^\vecc{v}$, we
call the multi-index $\vecc{v}$, the {\em vector of} $a_\vecc{v} \p^\vecc{v}$, and write
$\vecc{v}=v(a_\vecc{v} \p^\vecc{v})$.
\end{definition}

Let $\cL$ be a maximally generated class of operators, and let $T$ be its set of maximal terms.
Temporarily make the simplifying assumption that every term in $T$ is of the same order
$k$.
Consider for the moment a particular operator $L$ in $\cL$.  
When $L'$ is obtained from $L$ by a gauge transformation, we have $L' = e^{-g} L e^g$ for some $g \in K$.

Now look at a particular term $a \p_{\vecc{v}}$ in $L$ of degree $k-1$. 
It is covered in the partial order $\preceq$ by some of the maximal terms in $T$ .  
We have that the vectors of maximal terms covering $a \p_{\vecc{v}}$ are of the form $\vecc{v} + \vecc{e}_i$ for $i$ in some subset $S$ of $\{ 1,2, \dots n \}$. 
Now let $a' \p_{\vecc{v}}$ be the term corresponding to $a \p_{\vecc{v}}$ in $L'$.
We have that $a'-a$ is given by
$$a'-a = \sum_{i \in S} (\vecc{v}(i) + 1) b_i g_{x_i}$$
where $b_i$ is the coefficient of the maximal term in $L$ with vector  $\vecc{v} + \vecc{e}_i$.

\begin{example} \label{submaximal_example}
Let $n = 3$, write $x$ for $x_1$, $y$ for $x_2$, and $z$ for $x_3$.  
Let 
\begin{equation*}
    T = \{ p \p_{xxyyz}, q \p_{xyyyz}, \p_{xyzzz} \} 
\end{equation*}
where $p, q \in K$.
Then the term $a_{121} \p_{xyyz}$ is covered by $p \p_{xxyyz}$ and $q \p_{xyyyz}$ in $\preceq$, but is not below $\p_{xyzzz}$.
In this case, $S = \{ 1,2 \}$, and we have $a_{121}' = a_{121} + 2p g_x + 3q g_y$.

Other terms that are covered by those in $T$ are those with derivative symbols $\p_{xxyz}$, $\p_{xxyy}$, $\p_{yyyz}$, $\p_{xyyz}$, $\p_{xyyy}$, $\p_{yzzz}$, $\p_{xzzz}$ and $\p_{xyzz}$.
The corresponding terms in the operator $L'$ will have $g_x$ in them when their derivative symbols are $\p_{xyyz}$, $\p_{yyyz}$ and $\p_{yzzz}$.  
Similarly, three terms in $L'$ will have $g_y$ in them and three will have $g_z$ in them.
We can rewrite $a_{121}' = a_{121} + 2p g_x + 3q g_y$ as $2p g_x + 3q g_y = a_{121}' - a_{121}$, and view it as a linear equation in the unknowns $g_x$ and $g_y$.
We have similar equations for each of the other 7 terms that are covered by terms in $T$, giving these 8 equations.
\begin{align}
    p g_z &= a_{220}' - a_{220}
    \label{g equation 1}\\
    2p g_y &= a_{211}' - a_{211}\\
    2p g_x + 3q g_y &= a_{121}' - a_{121}\\
    q g_z &= a_{130}' - a_{130}
    \label{g equation 4}\\
    q g_x &= a_{031}' - a_{031}\\
    3g_z &= a_{112}' - a_{112}
    \label{g equation 6}\\  
    g_y &= a_{103}' - a_{103}
    \label{g equation 7}\\
    g_x &= a_{013}' - a_{013}
    \label{g equation 8}
\end{align}

We only need three equations to solve for $g_x$, $g_y$ and $g_z$, so have five ``extra'' equations.
For example, equations (\ref{g equation 1}), (\ref{g equation 4}) and (\ref{g equation 6}) give us 
$(a_{220}' - a_{220})/p = g_z = (a_{112}' - a_{112})/3$ and
$(a_{130}' - a_{130})/q = g_z = (a_{112}' - a_{112})/3$.
Rearranging these gives
$a_{220}'/p - a_{112}'/3 = a_{220}/p - a_{112}/3$ and
$a_{130}'/q - a_{112}'/3 = a_{130}/q - a_{112}/3$, respectively.
This shows that $a_{220}/p - a_{112}/3$ and $a_{130}/q - a_{112}/3$ are invariants.
Similar calculations with expressions for $g_x$ and $g_y$ would yield three more invariants.
We will call invariants like these {\em extra invariants}.

Next we take three equations where we have solved for $g_x$, $g_y$ and $g_z$.  
In the presence of the five extra invariants, it does not matter what they are;  we will obtain an equivalent set of invariants.
So we will use equations (\ref{g equation 6}), (\ref{g equation 7}) and (\ref{g equation 8}).
Concentrating on (\ref{g equation 7}) and (\ref{g equation 8}), the compatibility condition stating partial derivatives are equal gives us 
$a'_{103x} - a_{103x} = g_{xy} = a'_{013y} - a_{013y}$.
Now we rearrange this, putting primed quantities on one side, and get
$a'_{103x} - a'_{013y} = a_{103x} - a_{013y}$, showing that
$a_{103x} - a_{013y}$ is an invariant.
Similar calculations with the other pairs of equations give us two more invariants.
We call invariants of this kind {\em compatibility invariants}.
(In general we have $n$ variables, and get $n(n-1)/2$ compatibility invariants.)
\end{example}

{While all the maximal terms of $\cL$ were at the same degree 
in the above procedure, it is enough to require the following.}
\begin{definition}\label{approximately flat definition}
Consider the class of operators $\cL$, maximally generated by $T$.  
{Let $M$ be the set of maximal terms of $\cL$, and let $S$ be the set of submaximal terms.
Then $\cL$ is {\em approximately flat} iff there are $n$ distinct elements of $S$, $s_1,s_2,\dots s_n$ so that for every $s_i$ there is a maximal term $m_i \in M$ where the vector of $m_i$ is the sum of $\vecc{e}_i$ and the vector of $s_i$.}
\end{definition}
For example, when $T$ is $\{ \p_{xx}, \p_{y} \}$, then $\cL = \{ \p_{xx} + a_{10} \p_x + \p_y + a_{00} \colon a_{10}, a_{00} \in K \}$ is not approximately flat.  
We have that $S$ is only $\{ a_{10} \p_x  \}$
because the constant term is covered by the nonmaximal term $a_{10} \p_x$.
Then there can not be $n = 2$ distinct elements of S.
  
Whenever all the elements of $T$ have the same degree, $\cL$ is approximately flat.
But when $T = \{ \p_{xxy}, \p_{yy} \}$ for example, $\cL$ is still approximately flat, since we may take 
$s_1 = a_{11} \p_{xy}$,
$m_1 = \p_{xxy}$,
$s_2 = a_{20} \p_{xx}$, and
$m_2 = \p_{xxy}$.
See Figure~\ref{approximately flat figure}.

Another example of a class that is not approximately flat is obtained by 
taking $T$ to be $\{ \p_{xz}, \p_{yz} \}$, so $\cL = \{ \p_{xz} + \p_{yz} + a_{100} \p_x + a_{010}\p_y + 
a_{001} \p_z + a_{000} \colon a_{100}, a_{010},a_{001},a_{000} \in K \}$.
Then $M = \{ \p_{xz}, \p_{yz} \}$, while
$S = \{ a_{100}\p_{x}, a_{010}\p_{y}, a_{001} \p_z \}$ is a set with $n = 3$ elements.
But $a_{001} \p_z$ is our only possible choice for both $s_1$ and $s_2$, and we fail to have distinct elements of $S$ for each $i$.

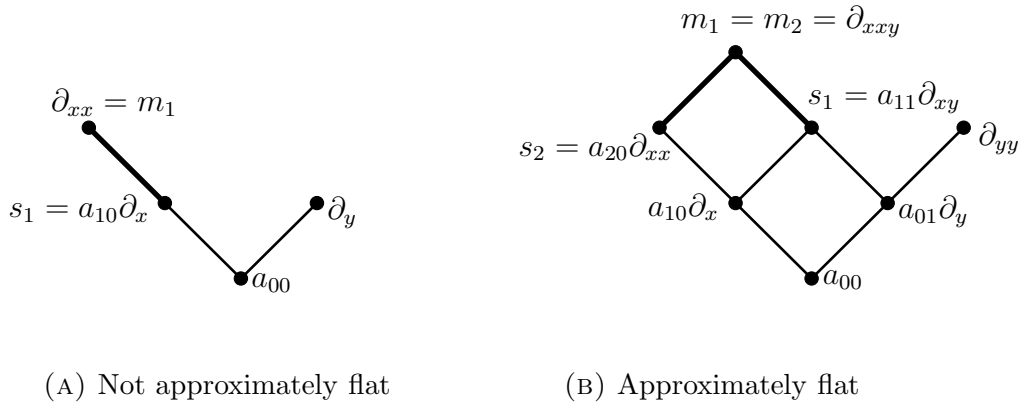
\begin{figure}
\advance\leftskip-1cm
\begin{subfigure}[]{0.4\linewidth}
\begin{tikzpicture}[line cap=round,line join=round,>=triangle 45,x=1cm,y=1cm]
\clip(-3.5,-1) rectangle (3,4);
\draw [line width=2pt] (-2,2)-- (-1,1);
\draw [line width=1pt] (-1,1)-- (0,0);
\draw [line width=1pt] (1,1)-- (0,0);
\draw (-2.6403992896660045,2.6586238644901297) node[anchor=north west] {$\partial_{xx} = m_1$};
\draw (-3.2027447578717287,1.2516161464380866) node[anchor=north west] {$s_1 = a_{10} \partial_x$};
\draw (0.9932322928761687,1.224295608223484) node[anchor=north west] {$\partial_y$};
\draw (-0.003967351956833749,0.21343569428318127) node[anchor=north west] {$a_{00}$};
\begin{scriptsize}
\draw [fill=black] (0,0) circle (2.5pt);
\draw [fill=black] (-1,1) circle (2.5pt);
\draw [fill=black] (-2,2) circle (2.5pt);
\draw [fill=black] (1,1) circle (2.5pt);
\end{scriptsize}
\end{tikzpicture}
\caption{Not approximately flat}
\end{subfigure}
\begin{subfigure}[]{0.4\linewidth}
\begin{tikzpicture}[line cap=round,line join=round,>=triangle 45,x=1cm,y=1cm]
\clip(-4.5,-1) rectangle (3,4);
\draw [line width=1pt] (-2,2)-- (-1,1);
\draw [line width=1pt] (-1,1)-- (0,0);
\draw [line width=1pt] (1,1)-- (0,0);
\draw (-4,2.066143485938711) node[anchor=north west] {$s_2 = a_{20}\partial_{xx}$};
\draw (-2.3,1.2715864461404187) node[anchor=north west] {$a_{10} \partial_x$};
\draw (0.9945080828606878,1.2410265599943306) node[anchor=north west] {$a_{01}\partial_y$};
\draw (0.001311783112822473,0.2478302602464655) node[anchor=north west] {$a_{00}$};
\draw [line width=2pt] (-1,3)-- (-2,2);
\draw [line width=2pt] (-1,3)-- (0,2);
\draw [line width=1pt] (0,2)-- (-1,1);
\draw [line width=1pt] (0,2)-- (1,1);
\draw [line width=1pt] (2,2)-- (1,1);
\draw (2.0488241549007293,2.2036629735961077) node[anchor=north west] {$\partial_{yy}$};
\draw (-1.8628412717985554,3.7622171670466034) node[anchor=north west] {$m_1 = m_2 = \partial_{xxy}$};
\draw (-0.2,2.7537409242256943) node[anchor=north west] {$s_1 = a_{11} \partial_{xy}$};
\begin{scriptsize}
\draw [fill=black] (0,0) circle (2.5pt);
\draw [fill=black] (-1,1) circle (2.5pt);
\draw [fill=black] (-2,2) circle (2.5pt);
\draw [fill=black] (1,1) circle (2.5pt);
\draw [fill=black] (0,2) circle (2.5pt);
\draw [fill=black] (-1,3) circle (2.5pt);
\draw [fill=black] (2,2) circle (2.5pt);
\end{scriptsize}
\end{tikzpicture}
\caption{Approximately flat}
\end{subfigure}
\caption{For Definition \ref{approximately flat definition}.}
\label{approximately flat figure}
\end{figure}

\begin{definition}\label{kinds of invariants}
Let $\cL$ be maximally generated, with set of maximal terms $T$.
Assume that $\cL$ is approximately flat.
The coefficients of terms in $T$ are {\em maximal} invariants of $\cL$, and we denote them by $I_m$.
Invariants obtained by equating expressions for some $g_{x_i}$ are {\em extra} invariants, which we denote by $I_e$.
(Letting $s$ be the number of submaximal terms, there are $s$ different equations where linear combinations of the $g_{x_i}$ are equal to some difference of the form $a_{\vecc{v}}' - a_{\vecc{v}}$, so there will be $s-n$ many extra invariants.)
If we have expressions $E_i$ and $E_j$ with $g_{x_i} = E_i$ and $g_{x_j} = E_j$ and obtain an invariant by rearranging the equation 
$E_{ix_j} = E_{jx_i}$, this invariant is a {\em compatibility} invariant, which we will denote $I_c$.

Finally, suppose that $a_{\vecc{v}} \p_\vecc{v}$ is a term that is neither maximal or submaximal, and that $E$ is an expression only involving the coefficients of terms that are above $a_{\vecc{v}} \p_\vecc{v}$, or that are maximal or submaximal (and their derivatives).  
Then an invariant of the form $a_{\vecc{v}} - E$ is an {\em upward} invariant, we call it an upward invariant for $a_{\vecc{v}} \p_\vecc{v}$, and denote it $I_\vecc{v}$.
\end{definition}

Examples of all of these types of invariants appear in the literature.  

\begin{theorem}\label{sufficient_for_a_complete_set_theorem}
Let $\cL$ be maximally generated, and let $T$ be its set of maximal terms.
Assume that $\cL$ is approximately flat.
Suppose that $I$ is a set of invariants for $\cL$ so that the following hold:
\begin{enumerate}
    \item $I$ contains all the maximal invariants of $\cL$.
    \item $I$ contains $s-n$ extra invariants, where $s$ is the number of submaximal terms.
    \item $I$ contains $n(n-1)/2$ compatibility invariants, one for each
    possible second-order compatibility partial of $g$.
    \item $I$ contains an upward invariant for every term that is not maximal or submaximal.
\end{enumerate}
Then the set of invariants $I$ is complete.
\end{theorem}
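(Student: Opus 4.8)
\emph{Proof plan.} The plan is to reconstruct a gauge function $g$ directly from the invariant data, processing the terms of an operator from the top of $\preceq$ downward and spending one of the four families of invariants at each level. Fix $L, L' \in \cL$ that take the same value on every invariant in $I$; the goal is to produce $g \in K$ with $L' = e^{-g} L e^{g}$. First I would note that, since $I$ contains all maximal invariants, $L$ and $L'$ already share their maximal terms, so any candidate $\tilde{L} = e^{-g} L e^{g}$ is automatically correct at the top level and this level costs nothing.

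Next I would turn to the submaximal terms. Each submaximal term $a_{\vecc{u}}\p^{\vecc{u}}$ contributes the exact linear relation $a'_{\vecc{u}} - a_{\vecc{u}} = \sum_i (\vecc{u}(i)+1)\, b_i\, g_{x_i}$, summed over the $i$ with $\vecc{u}+\vecc{e}_i$ maximal, $b_i$ being the already-matched coefficient of that maximal term --- in total $s$ relations in the $n$ unknowns $g_{x_1},\dots,g_{x_n}$. Approximate flatness is exactly the hypothesis that lets us single out $n$ of these relations, each exhibiting a distinct $g_{x_i}$, and solve them for explicit expressions $g_{x_i} = E_i$ in the coefficients of $L$, of $L'$, and in the maximal coefficients. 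Substituting these back into the remaining $s-n$ relations and separating primed from unprimed data turns each into the assertion that $L$ and $L'$ agree on one of the $s-n$ extra invariants, so the hypothesis on extra invariants forces all $s$ submaximal relations to hold for $(E_1,\dots,E_n)$. This tuple is still only a candidate gradient: forming, for each $i<j$, the identity obtained by equating $\p_{x_j}E_i$ with $\p_{x_i}E_j$ and again separating primed from unprimed data produces precisely the $n(n-1)/2$ compatibility invariants, so agreement on those yields $\p_{x_j}E_i = \p_{x_i}E_j$ for all $i,j$. Since $K$ is differentially closed, there is then $g \in K$ with $g_{x_i} = E_i$ for every $i$. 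Put $\tilde{L} = e^{-g} L e^{g} \in \cL$; by construction $\tilde{L}$ and $L'$ agree on every maximal and every submaximal term.

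It remains to match the lower terms, and here I would run a downward induction over $\preceq$. Let $\vecc{v}$ be a term that is neither maximal nor submaximal, and assume $\tilde{a}_{\vecc{w}} = a'_{\vecc{w}}$ for every term $\vecc{w}$ strictly above $\vecc{v}$ (true for maximal and submaximal $\vecc{w}$ by the previous steps, and for the rest by induction). The hypothesis supplies an upward invariant $a_{\vecc{v}} - E_{\vecc{v}}$ in which $E_{\vecc{v}}$ involves only the coefficients of terms above $\vecc{v}$ or maximal or submaximal, together with their derivatives. Being an invariant, it takes the same value on the gauge-equivalent operators $\tilde{L}$ and $L$, hence the same value on $L'$, giving $\tilde{a}_{\vecc{v}} - \tilde{E}_{\vecc{v}} = a'_{\vecc{v}} - E'_{\vecc{v}}$; but the inductive hypothesis makes $\tilde{E}_{\vecc{v}} = E'_{\vecc{v}}$, so $\tilde{a}_{\vecc{v}} = a'_{\vecc{v}}$. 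As $\preceq$ is well-founded on the finite set of terms of $\cL$, the induction terminates, all coefficients of $\tilde{L}$ and $L'$ coincide, and $L' = e^{-g} L e^{g}$, as required.

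The step I expect to be the real obstacle is the submaximal stage: one must check that approximate flatness genuinely permits solving $n$ of the submaximal relations for the individual $g_{x_i}$ --- equivalently, that the gauge action on submaximal coefficients has rank exactly $n$, which is also what makes the count of precisely $s-n$ extra invariants accurate --- and, with some care, that the final conclusion is independent of which $n$ relations and which representative extra and compatibility invariants were chosen, different choices producing equivalent invariant sets as already observed in Example~\ref{submaximal_example}. By contrast the maximal, integration, and descent steps are essentially bookkeeping, using only that $\cL$ is closed under gauge transformations, that $K$ is differentially closed, and that $\preceq$ is well-founded.
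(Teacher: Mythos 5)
Your proposal is correct and follows essentially the same route as the paper's proof: solve the submaximal linear system for the $g_{x_i}$ (with the extra invariants guaranteeing consistency and the compatibility invariants guaranteeing integrability), integrate to get $g$, and then match the remaining coefficients by downward induction using the upward invariants. Your explicit flagging of the solvability of the submaximal system as the delicate point is apt --- the paper likewise glosses over it here and only fully addresses it later via the notion of a \emph{framed} class.
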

\begin{proof}
Let $I$ be a set of invariants as above.
Let $L \in \cL$ be given, and let $L'$ be an element of $\cL$ where $L$ and $L'$ have the same values for all invariants in $I$.
We must show there exists a function $g \in K$ so that $L' = e^{-g} L e^g$.

Consider solving for the derivatives of $g$ in $L'$.  
Since $\cL$ is approximately flat, we only obtain linear equations in the first partial derivatives of $g$.
The values of these $g_{x_i}$ are the same no matter which equations we use, because $I$ contains enough extra invariants.
Since $I$ contains enough compatibility invariants, we have a (compatible) system of first order partial differential equations for $g$. 
This gives a value for $g$ in $K$ that is unique up to an additive constant, and the additive constant does not change what our candidate gauge transformation is.

So we have $g \in K$ where $L'' = e^{-g} L e^g$ agrees with $L'$ on the coefficients of all maximal and submaximal terms.
It remains to show that $L''$ and $L'$ agree on their remaining terms.
We do this by downward induction on the degree of terms.
For our basis, let $m$ be the highest degree of a term in $\cL$ that is not maximal or submaximal, and consider any term $a_{\vecc{v}} \p^{\vecc{v}}$ of degree $m$.
If $a_{\vecc{v}} \p^{\vecc{v}}$ is maximal or submaximal, we already have that it has the same value in $L'$ and $L''$.
So suppose $a_{\vecc{v}} \p^{\vecc{v}}$ is not maximal or submaximal.
By our choice of $m$, every term of $\cL$ that is above $a_{\vecc{v}} \p^{\vecc{v}}$ is submaximal or maximal, and all of these terms have the same value in $L'$ and $L''$. 
Now $a_{\vecc{v}} \p^{\vecc{v}}$ has an upward invariant of the form $a - E$, where $E$ is an expression only involving terms above $a_{\vecc{v}} \p^{\vecc{v}}$.
Since $L'$ and $L$ have the same values for all invariants in $I$, this upward invariant is equal in $L'$ and $L$.
And since $L''$ is obtained from $L$ by a gauge transformation, the invariant has the same value in $L$ and in $L''$, implying that it has the same value in $L'$ and $L''$.
Since $E$ and the invariant both have the same value in $L'$ and $L''$, we have that $a$ has the same value in $L'$ and $L''$.

The inductive argument now continues.
In the next stage, terms of degree $m-1$ are only below terms that have the same value in $L'$ and $L''$, and are either maximal, submaximal, or have upward invariants.  
In any event, all terms of degree $m-1$ have the same value in $L'$ and $L''$.
The process continues, eventually showing that the term of degree $0$ has the same value in $L'$ and $L''$.
\end{proof}

\section{Constructing complete sets of invariants}
\label{sec:method}

\begin{theorem}\label{unique_C_theorem}  
Let $\mathcal{L}$, $\mathcal{C}$ and $\mathcal{N}$ be classes 
of operators that are closed under gauge transformations. Assume that 
for every $L \in \mathcal{L}$, there is a unique $C \in \mathcal{C}$ so 
that $N = L-C$ is in $\mathcal{N}$.  Then all of the 
invariants of $\cN$ are invariants for $\mathcal{L}$.
\end{theorem}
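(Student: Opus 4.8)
The plan is to exploit the word \emph{unique} in the hypothesis: it makes the assignment $L \mapsto N$ a well-defined function on $\cL$, and then closure of $\cC$ and $\cN$ under gauge transformations forces this function to be gauge-covariant. Concretely, for $L \in \cL$ let $C \in \cC$ be the unique operator with $N := L - C \in \cN$, and call $N =: N(L)$ the $\cN$-\emph{part} of $L$. Given an invariant $J$ of $\cN$ (an expression in the coefficients of operators in $\cN$ and their derivatives), I would define the corresponding expression $\widehat J$ on $\cL$ by substituting the coefficients of $N(L)$ — which are themselves functions of the coefficients of $L$, by uniqueness — into $J$, i.e. $\widehat J(L) := J(N(L))$. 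The theorem then amounts to the claim that every such $\widehat J$ is an invariant of $\cL$.

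The one real step is a covariance lemma: if $L' = e^{-g} L e^{g}$ with $L, L' \in \cL$, then $N(L') = e^{-g} N(L) e^{g}$. To see this, write $L = C + N(L)$ and conjugate: $L' = e^{-g} C e^{g} + e^{-g} N(L) e^{g}$. Because $\cC$ and $\cN$ are closed under gauge transformations, $e^{-g} C e^{g} \in \cC$ and $e^{-g} N(L) e^{g} \in \cN$, so this displays a decomposition of $L'$ of exactly the form in the hypothesis; uniqueness applied to $L'$ then identifies it as the decomposition of $L'$, giving $N(L') = e^{-g} N(L) e^{g}$.

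Granting the lemma, the conclusion is immediate. If $L, L' \in \cL$ are gauge-related, then $N(L)$ and $N(L')$ are gauge-related elements of $\cN$, so $J(N(L)) = J(N(L'))$ since $J$ is an invariant of $\cN$; hence $\widehat J(L) = \widehat J(L')$, i.e. $\widehat J$ is an invariant of $\cL$. Thus all invariants of $\cN$ (read via $J \mapsto \widehat J$) are invariants for $\cL$.

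I do not expect a serious obstacle here; the proof is essentially a bookkeeping argument around uniqueness. The only point worth stating carefully is why $\widehat J$ is even a legitimate object: without the uniqueness hypothesis, $N(L)$ would not be well defined, so the role of \emph{unique} is precisely to make $L \mapsto N(L)$ a function of the coefficients of $L$. Closure of $\cC$ and $\cN$ under gauge transformations is used exactly once, in the covariance lemma, and is the hypothesis that cannot be weakened. (One might also remark that $C$ itself need not be — and typically is not — determined up to gauge by the gauge class of $L$; it is only the gauge class of $N(L)$ that is, which is all that the argument needs.)
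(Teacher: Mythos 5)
Your proof is correct and follows essentially the same route as the paper's: conjugate the decomposition $L = C + N$, use closure of $\mathcal{C}$ and $\mathcal{N}$ together with uniqueness to identify $e^{-g} N e^{g}$ as the $\mathcal{N}$-part of $L'$, and conclude that invariants of $\mathcal{N}$ pull back to invariants of $\mathcal{L}$. (One quibble with your closing parenthetical: the same uniqueness argument shows $e^{-g} C e^{g}$ is the $\mathcal{C}$-part of $L'$, so the gauge class of $C$ \emph{is} determined by that of $L$; this does not affect your argument.)
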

\begin{proof}
Let $\cL$, $\mathcal{C}$ and $\mathcal{N}$ be closed under gauge transformations, and assume that for each $L \in \cL$ there is a unique $C \in \mathcal{C}$ with $L-C \in \mathcal{N}$.  
That is, every $L \in \cL$ determines a unique $N \in \mathcal{N}$.
Now let a particular $L \in \cL$ be given, where $N = L-C$ is in $\mathcal{N}$.
Gauging by any nonzero $g \in K$, we have $N' = g^{-1} N g \in \mathcal{N}$, $L' = g^{-1} L g \in \cL$ and $C' = g^{-1} C g \in \mathcal{C}$ with $N' = L'-C'$.  
By uniqueness, $N'$ must be the element of $\mathcal{N}$ determined by $L'$.
The 
invariants of $N' = g^{-1} N g$ are the same as the corresponding 
invariants in $N$, making them invariants of $\cL$.
\end{proof}

In an application,
$\mathcal{L}$ would be the class of all operators of 
a particular form, and the classes $\mathcal{C}$ and $\mathcal{N}$ would 
be tailored to produce a family of invariants for $\mathcal{L}$.  
The invariants of $\cN$ that we will use are usually the coefficients of its maximal terms.


\begin{example} \label{simple example}
Suppose $n = 2$ and let $\mathcal{L}$ 
be the class of operators maximally generated by $T$, where
\begin{equation*}
    T = \{ \p_{xxy} \}. 
\end{equation*}
Consider the operator 
$$L = \p_{xxy} + a_{20}\p_{xx} + a_{11}\p_{xy} + a_{10}\p_{x}
 + a_{01}\p_{y}  + a_{00}  \in \mathcal{L} \ .
 $$
 
To obtain a complete set of invariants, we first include all the needed maximal, extra, and compatibility invariants.
The coefficient 1 of $\p_{xxy}$ is a maximal invariant.
There are two submaximal terms, $a_{20}\p_{xx}$ and $a_{11}\p_{xy}$.  
Since the dimension is $n = 2$, there are no extra invariants.
There is one compatibility invariant, which we get by observing that $e^{-g} L e^g = L' = \p_{xxy} + a'_{20}\p_{xx} + a'_{11}\p_{xy} + \dots$
has $a'_{20} = a_{20} + g_y$ and $a'_{11} = a_{11} + 2g_x$.
Then solving and differentiating, $(2a'_{20} - 2a_{20})_x = 2g_{xy} = (a'_{11} - a_{11})_y$.
This shows that
$$I_c = 2a_{20y} - a_{11x}$$ is the desired compatibility invariant.

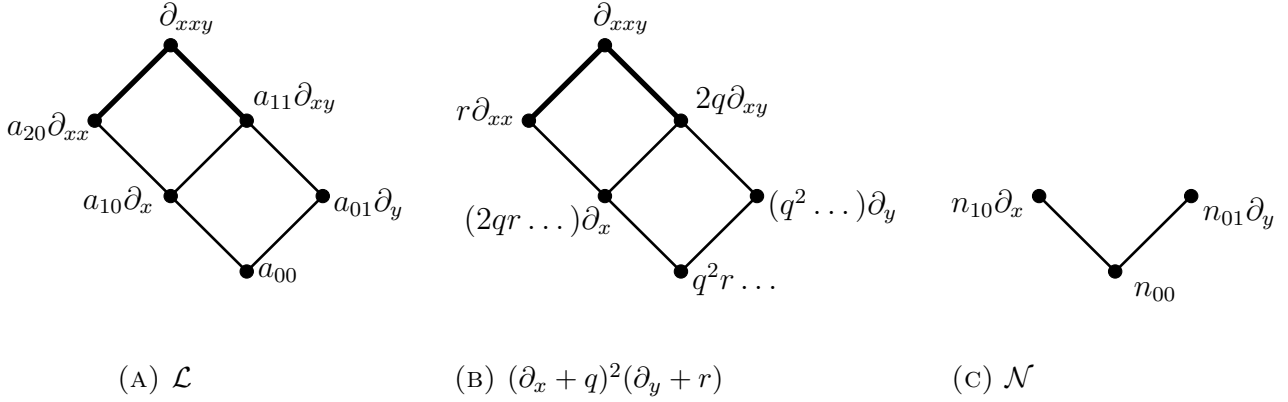
\begin{figure}
\advance\leftskip-2.5cm
\begin{subfigure}{0.35\linewidth}
\begin{tikzpicture}[line cap=round,line join=round,>=triangle 45,x=1cm,y=1cm]
\clip(-4,-1) rectangle (4,4);
\draw [line width=1pt] (-2,2)-- (-1,1);
\draw [line width=1pt] (-1,1)-- (0,0);
\draw [line width=1pt] (1,1)-- (0,0);
\draw (-3.3,2.2495028028152397) node[anchor=north west] {$a_{20}\partial_{xx}$};
\draw (-2.3,1.2715864461404187) node[anchor=north west] {$a_{10} \partial_x$};
\draw (0.9945080828606878,1.2410265599943306) node[anchor=north west] {$a_{01}\partial_y$};
\draw (0.001311783112822473,0.2478302602464655) node[anchor=north west] {$a_{00}$};
\draw [line width=2pt] (-1,3)-- (-2,2);
\draw [line width=2pt] (-1,3)-- (0,2);
\draw [line width=1pt] (0,2)-- (-1,1);
\draw [line width=1pt] (0,2)-- (1,1);
\draw (-1.2974833780959245,3.7) node[anchor=north west] {$\partial_{xxy}$};
\draw (-0.05980798917935386,2.65) node[anchor=north west] {$a_{11} \partial_{xy}$};
\begin{scriptsize}
\draw [fill=black] (0,0) circle (2.5pt);
\draw [fill=black] (-1,1) circle (2.5pt);
\draw [fill=black] (-2,2) circle (2.5pt);
\draw [fill=black] (1,1) circle (2.5pt);
\draw [fill=black] (0,2) circle (2.5pt);
\draw [fill=black] (-1,3) circle (2.5pt);
\end{scriptsize}
\end{tikzpicture}
\caption{$\cL$}
\end{subfigure}
\begin{subfigure}[]{0.35\linewidth}
\begin{tikzpicture}[line cap=round,line join=round,>=triangle 45,x=1cm,y=1cm]
\clip(-4,-1) rectangle (4,4);
\draw [line width=1pt] (-2,2)-- (-1,1);
\draw [line width=1pt] (-1,1)-- (0,0);
\draw [line width=1pt] (1,1)-- (0,0);
\draw (-3.1,2.4481420627648127) node[anchor=north west] {$r\partial_{xx}$};
\draw (-3,1.0271073569717135) node[anchor=north west] {$(2qr\dots) \partial_x$};
\draw (0.9945080828606878,1.2410265599943306) node[anchor=north west] {$(q^2 \dots)\partial_y$};
\draw (0.001311783112822473,0.2478302602464655) node[anchor=north west] {$q^2 r \dots$};
\draw [line width=2pt] (-1,3)-- (-2,2);
\draw [line width=2pt] (-1,3)-- (0,2);
\draw [line width=1pt] (0,2)-- (-1,1);
\draw [line width=1pt] (0,2)-- (1,1);
\draw (-1.2974833780959245,3.7) node[anchor=north west] {$\partial_{xxy}$};
\draw (0.047151612331954716,2.6) node[anchor=north west] {$2q \partial_{xy}$};
\begin{scriptsize}
\draw [fill=black] (0,0) circle (2.5pt);
\draw [fill=black] (-1,1) circle (2.5pt);
\draw [fill=black] (-2,2) circle (2.5pt);
\draw [fill=black] (1,1) circle (2.5pt);
\draw [fill=black] (0,2) circle (2.5pt);
\draw [fill=black] (-1,3) circle (2.5pt);
\end{scriptsize}
\end{tikzpicture}
\caption{$(\p_x + q)^2 (\p_y + r)$}
\end{subfigure}
\begin{subfigure}[]{0.3\linewidth}
\begin{tikzpicture}[line cap=round,line join=round,>=triangle 45,x=1cm,y=1cm]
\clip(-4,-1) rectangle (4,4);
\draw [line width=1pt] (-1,1)-- (0,0);
\draw [line width=1pt] (1,1)-- (0,0);
\draw (-2.3,1.2257466169212865) node[anchor=north west] {$n_{10} \partial_x$};
\draw (1.009788025933732,1.118787015409978) node[anchor=north west] {$n_{01} \partial_y$};
\draw (0.1,0) node[anchor=north west] {$n_{00}$};
\begin{scriptsize}
\draw [fill=black] (0,0) circle (2.5pt);
\draw [fill=black] (-1,1) circle (2.5pt);
\draw [fill=black] (1,1) circle (2.5pt);
\end{scriptsize}
\end{tikzpicture}
\caption{$\cN$}
\end{subfigure}
\caption{For Example \ref{simple example}}
\label{simple example figure}
\end{figure}
We now let
\begin{equation*}
    \mathcal{C} = \{ (\p_x 
+ q)^2 (\p_y + r) \colon q,r \in K \} \ ,
\end{equation*}
and take $\mathcal{N}$ to be the set of elements of $\mathcal{L}$ 
with coefficients of $\p_{xxy}$, $\p_{xy}$ and $\p_{xx}$ all 
zero.  This will give the first batch of upward invariants.  
(See Figure \ref{simple example figure}.)

We have that
$L-C$ is
$(a_{20} - r) \p_{xx} + (a_{11} - 2q) \p_{xy} + 
(a_{10} - (2qr + 2r_x)) \p_x + (a_{01} - (q^2 + q_x)) \p_y 
+ (a_{00} - ((q^2 + q)r + 2qr_x + r_{xx}))$.

For this to be in $\cN$, we must have 
$a_{20} - r = a_{11} - 2q = 0$, giving 
$r = a_{20}$ and $q = a_{11} / 2$, which uniquely determines
$C$.  Substituting these in, we get that 
$N = L - C =
(a_{10} - (a_{11} a_{20} + 2 a_{20x})) \p_x + 
(a_{01} - (a_{11}^2/4 + a_{11x}/2)) \p_y 
+ (a_{00} - ((a_{11}^2/4 + a_{11}/2) a_{20} + a_{11} a_{20x} + a_{20xx}))$.
The coefficients of $\p_x$ and $\p_x$ are two invariants,
\begin{align*}
I_{10} =& a_{10} - (a_{11} a_{20} + 2 a_{20x}) \mbox{\quad and}\\
I_{01} =& a_{01} - (a_{11}^2/4 + a_{11x}/2).
\end{align*}

In particular, we get the associated 
representation
\begin{equation}
\label{ex4:incomplete:1}
L = (\p_x 
+ q)^2 (\p_y + r) + I_{10} \p_x + I_{01} \p_y + s 
\end{equation}
for some $q,r,s \in K$.
This gives meaning to the invariants.  We have that $I_{10}$ is zero iff there is a representation of $L$ without a $\p_x$ term, and so on.

Next we add more terms to the form of $\mathcal{C}$, 
setting 
\begin{equation*}
\mathcal{C}' = \{ (\p_x + q)^2 (\p_y + r) + 
(\p_x + s)(\p_y + t) \colon q,r,s,t \in K \}
\end{equation*}
and letting 
$\mathcal{N}'$ be the set of elements of $\mathcal{N}$ with coefficients 
of $\p_x$, and $\p_y$ both zero.  
Letting $C' \in \cC'$, we have that
$L-C'$ is 
$
(a_{20} - r) \p_{xx} + (a_{11} - 1 - 2q) \p_{xy} + 
(a_{10} - (2qr + 2r_x) - t) \p_x + 
(a_{01} - (q^2 + q_x) - s) \p_y 
+ (a_{00} - ((q^2 + q)r + 2qr_x + r_{xx} + (st + t_x)))$.

There is one way to make this be an element of $\cN'$.
As before, we let $r = a_{20}$.  
With a slight change, we let $q = (a_{11}-1)/2$.
Now that $q$ and $r$ are determined, we let
$s = a_{01} - (q^2 + q_x)$ and
$t = a_{10} - (2qr + 2r_x)$.

Then the constant term of $N' = L - C'$ is 
$a_{00} - ((q^2 + q)r + 2qr_x + r_{xx} + (st + t_x))$.
It is an invariant of $\cL$.  
Expanding, we get 

\begin{align*}
\begin{split}
I_{00} =& a_{00} - ((a_{11}-1)^2/4 + (a_{11}-1)/2)a_{20}
- (a_{11}-1)a_{20x} - 
a_{20xx} -\\ 
&(a_{01} - ((a_{11}-1)/2)^2 + (a_{11}-1)/2)_x)(a_{10} - ((a_{11}-1))a_{20} + 2a_{20x}) \\
&- (a_{10} - (a_{11}-1)a_{20} - 2a_{20x})_x 
\end{split}
\end{align*}

In particular, we get the associated representation,
\begin{equation*}
 L =  (\p_x + q)^2 (\p_y + r) + 
(\p_x + s)(\p_y + t) + I_{00}
\end{equation*}
for some $q,r,s,t \in K$, where 
$q$ and $r$ are possibly different from $q$ and $r$ in incomplete factorization~\eqref{ex4:incomplete:1}.

\end{example}

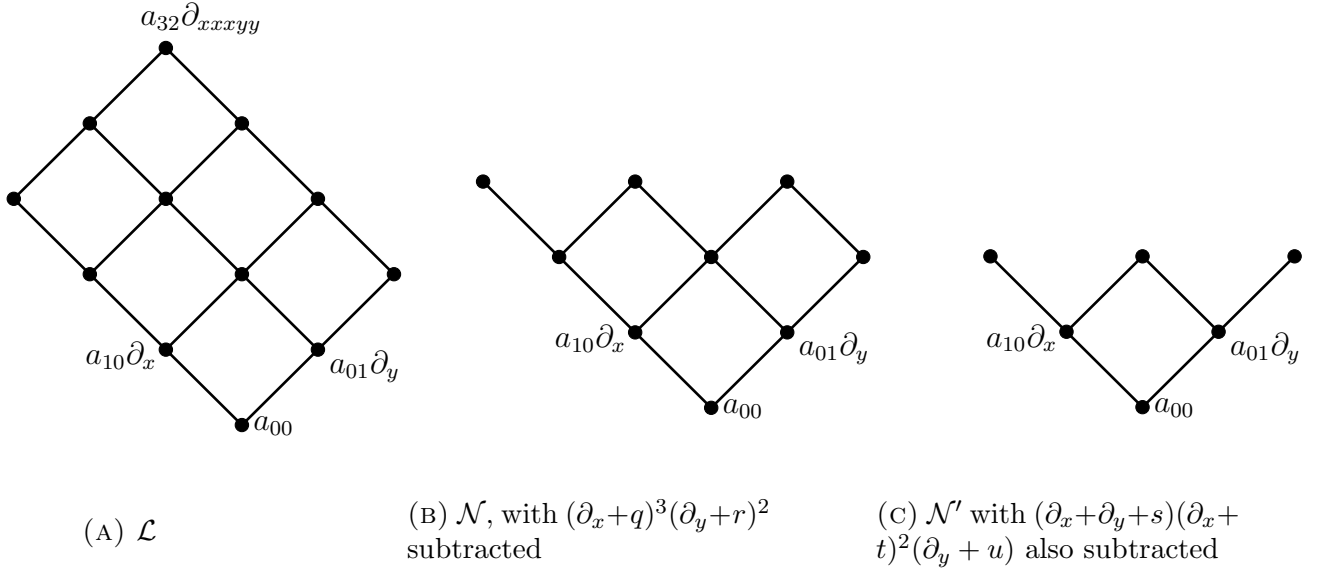
\begin{figure}
\advance\leftskip-1.2cm
\begin{subfigure}[]{0.3\linewidth}
\begin{tikzpicture}[line cap=round,line join=round,>=triangle 45,x=1cm,y=1cm]
\clip(-4,-1) rectangle (3,6);
\draw [line width=1pt] (-1,1)-- (0,0);
\draw [line width=1pt] (1,1)-- (0,0);
\draw (-2.2,1.2406712124810024) node[anchor=north west] {$a_{10} \partial_x$};
\draw (1.0083666358804297,1.1312241783764077) node[anchor=north west] {$a_{01} \partial_y$};
\draw (0.007708038352705493,0.25564790553965033) node[anchor=north west] {$a_{00}$};
\draw [line width=1pt] (-2,2)-- (-1,1);
\draw [line width=1pt] (0,2)-- (1,1);
\draw [line width=1pt] (2,2)-- (1,1);
\draw [line width=1pt] (0,2)-- (-1,1);
\draw [line width=1pt] (-1,3)-- (-2,2);
\draw [line width=1pt] (-1,3)-- (0,2);
\draw [line width=1pt] (1,3)-- (2,2);
\draw [line width=1pt] (1,3)-- (0,2);
\draw [line width=1pt] (0,4)-- (-1,3);
\draw [line width=1pt] (0,4)-- (1,3);
\draw [line width=1pt] (-3,3)-- (-2,2);
\draw [line width=1pt] (-2,4)-- (-1,3);
\draw [line width=1pt] (-2,4)-- (-3,3);
\draw [line width=1pt] (-1,5)-- (-2,4);
\draw [line width=1pt] (-1,5)-- (0,4);
\draw (-1.4776445673525103,5.743634901355755) node[anchor=north west] {$a_{32} \partial_{xxxyy}$};
\begin{scriptsize}
\draw [fill=black] (0,0) circle (2.5pt);
\draw [fill=black] (-1,1) circle (2.5pt);
\draw [fill=black] (1,1) circle (2.5pt);
\draw [fill=black] (-2,2) circle (2.5pt);
\draw [fill=black] (0,2) circle (2.5pt);
\draw [fill=black] (2,2) circle (2.5pt);
\draw [fill=black] (-3,3) circle (2.5pt);
\draw [fill=black] (-1,3) circle (2.5pt);
\draw [fill=black] (1,3) circle (2.5pt);
\draw [fill=black] (0,4) circle (2.5pt);
\draw [fill=black] (-2,4) circle (2.5pt);
\draw [fill=black] (-1,5) circle (2.5pt);
\end{scriptsize}
\end{tikzpicture}
\caption{$\cL$}
\end{subfigure}
\hfill
\begin{subfigure}[]{0.3\linewidth}
\begin{tikzpicture}[line cap=round,line join=round,>=triangle 45,x=1cm,y=1cm]
\clip(-4,-1) rectangle (3,6);
\draw [line width=1pt] (-1,1)-- (0,0);
\draw [line width=1pt] (1,1)-- (0,0);
\draw (-2.2,1.2406712124810024) node[anchor=north west] {$a_{10} \partial_x$};
\draw (1.0083666358804297,1.1312241783764077) node[anchor=north west] {$a_{01} \partial_y$};
\draw (0.007708038352705493,0.25564790553965033) node[anchor=north west] {$a_{00}$};
\draw [line width=1pt] (-2,2)-- (-1,1);
\draw [line width=1pt] (0,2)-- (1,1);
\draw [line width=1pt] (2,2)-- (1,1);
\draw [line width=1pt] (0,2)-- (-1,1);
\draw [line width=1pt] (-1,3)-- (-2,2);
\draw [line width=1pt] (-1,3)-- (0,2);
\draw [line width=1pt] (1,3)-- (2,2);
\draw [line width=1pt] (1,3)-- (0,2);
\draw [line width=1pt] (-3,3)-- (-2,2);
\begin{scriptsize}
\draw [fill=black] (0,0) circle (2.5pt);
\draw [fill=black] (-1,1) circle (2.5pt);
\draw [fill=black] (1,1) circle (2.5pt);
\draw [fill=black] (-2,2) circle (2.5pt);
\draw [fill=black] (0,2) circle (2.5pt);
\draw [fill=black] (2,2) circle (2.5pt);
\draw [fill=black] (-3,3) circle (2.5pt);
\draw [fill=black] (-1,3) circle (2.5pt);
\draw [fill=black] (1,3) circle (2.5pt);
\end{scriptsize}
\end{tikzpicture}
\caption{$\cN$, with $(\p_x +q)^3 (\p_y + r)^2$ subtracted}
\end{subfigure}
\hfill
\begin{subfigure}[]{0.3\linewidth}
\begin{tikzpicture}[line cap=round,line join=round,>=triangle 45,x=1cm,y=1cm]
\clip(-3.5,-1) rectangle (2.5,6);
\draw [line width=1pt] (-1,1)-- (0,0);
\draw [line width=1pt] (1,1)-- (0,0);
\draw (-2.2,1.2406712124810024) node[anchor=north west] {$a_{10} \partial_x$};
\draw (1.0083666358804297,1.1312241783764077) node[anchor=north west] {$a_{01} \partial_y$};
\draw (0.007708038352705493,0.25564790553965033) node[anchor=north west] {$a_{00}$};
\draw [line width=1pt] (-2,2)-- (-1,1);
\draw [line width=1pt] (0,2)-- (1,1);
\draw [line width=1pt] (2,2)-- (1,1);
\draw [line width=1pt] (0,2)-- (-1,1);
\begin{scriptsize}
\draw [fill=black] (0,0) circle (2.5pt);
\draw [fill=black] (-1,1) circle (2.5pt);
\draw [fill=black] (1,1) circle (2.5pt);
\draw [fill=black] (-2,2) circle (2.5pt);
\draw [fill=black] (0,2) circle (2.5pt);
\draw [fill=black] (2,2) circle (2.5pt);
\end{scriptsize}
\end{tikzpicture}
\caption{$\cN'$ with $(\p_x + \p_y + s)(\p_x +t)^2 (\p_y +u)$ also subtracted}
\end{subfigure}
\caption{For Example \ref{complicated example}}
\label{complicated example figure}
\end{figure}

We will now work through a more complicated example in less detail, commenting on the process as we go. 
\begin{example} \label{complicated example}
Suppose $n = 2$ and 
let $\mathcal{L}$ 
be the class of operators maximally generated by $T$, where
\begin{equation*}
    T = \{ \p_{xxxyy} \}. 
\end{equation*}
So elements of $\cL$ have the form
$\p_{xxxyy} + a_{31} \p_{xxxy} + a_{22} \p_{xxyy} +
a_{30} \p_{xxx} + a_{21} \p_{xxy} + a_{12} \p_{xyy} + a_{20} \p_{xx} + a_{11} \p_{xy} + a_{02} \p_{yy} + a_{10} \p_x + a_{01} \p_y + a_{00}$.
The coefficient of $\p_{xxxyy}$ is the maximal invariant and there are no extra invariants.
Letting primes denote the gauge
action on the coefficients,
we have $a'_{22} = a_{22} + 3g_x$ and $a'_{31} = a_{31} + 2g_y$.
Then $2a'_{22} - 2a_{22} = 6g_x$ and $3a'_{31} - 3a_{31} = 6g_y$.
Thus $2a'_{22y} - 2a_{22y} = 6g_{xy}$ and $3a'_{31x} - 3a_{31x} = 6g_{xy}$.
So $2a'_{22y} - 2a_{22y} = 3a'_{31x} - 3a_{31x}$, and
$$I_c=2a_{22y} - 3a_{31x}$$ is the desired compatibility invariant.

We now let
\begin{equation*}
\mathcal{C} = \{ (\p_x 
+ q)^3 (\p_y + r)^2 \colon q,r \in K \} \ ,    
\end{equation*}
and take $\mathcal{N}$ to be the set of elements of $\mathcal{L}$ 
with coefficients of $\p_{xxxyy}$, $\p_{xxxy}$ and $\p_{xxyy}$ all 
zero.  
We see that $(\p_x + q)^3 (\p_y + r)^2$
is $\p_{xxxyy} + 2r \p_{xxxy} + 3q \p_{xxyy} + (r^2 +r_y)\p_{xxx} + 6(qr + r_x)\p_{xxy} + 3(q^2+q_x)\p_{xyy} + \dots $, so for 
$L = \p_{xxxyy} + a_{31} \p_{xxxy} + a_{22} \p_{xxyy} + \dots $ there is a unique choice of $q$ and $r$ to get $L - (\p_x + q)^3 (\p_y + r)^2 \in \cN$, 
we let $r = a_{31}/2$ and $q = a_{22}/3$.
This gives the first batch of three upward invariants, for $a_{30}$, $a_{21}$ and $a_{12}$,
$$I_{30} = a_{30} - (r^2 + r_y) \ .$$


Now we add more terms to the form of $\mathcal{C}$, 
setting
\begin{equation*}
\mathcal{C}' = \{ (\p_x + q)^3 (\p_y + r)^2 + (\p_x + \p_y + 
s)(\p_x + t)^2 (\p_y + u) \colon q,r,s,t,u \in K \}    
\end{equation*}
and letting 
$\mathcal{N}'$ be the set of elements of $\mathcal{N}$ with coefficients 
of $\p_{xxx}$, $\p_{xxy}$ and $\p_{xyy}$ all zero.  

One has to be careful in the choice of $\mathcal{C}'$.
Intuitively, our choice of the additional term 
$(\p_x + \p_y + s)(\p_x + t)^2 (\p_y + u)$ was good because it had three free parameters, precisely the number of terms we were trying to ``zero out'' when going to $\mathcal{N'}$.
While it is fine to use a factor such as $(\p_x + \p_y + s)$, this is not necessary.

Note that the added term of 
$(\p_x + \p_y + s)(\p_x + t)^2 (\p_y + u)$
has principal symbol $\p_{xxxy} + \p_{xxyy}$.
If we were to use the same values of $q$ and $r$ as in the previous step, any element of 
$\cL - \cC'$ would have coefficients of 
$\p_{xxxy}$ and $\p_{xxyy}$ that were $-1$.
So we modify $q$ and $r$ to make these coefficients zero.  We already have that for any $L \in \cL$, $q$ and $r$ can be chosen to make the coefficients of $\p_{xxxy}$ and $\p_{xxyy}$ zero in $L - \cC$, so we merely
use $L - (\p_{xxxy} + \p_{xxyy}) \in \cL$
to determine our new $q$ and $r$.

Since $(\p_x + \p_y + s)(\p_x + t)^2 (\p_y + u)$ is 
$\p_{xxxy} + \p_{xxyy} + u \p_{xxx} + (u+2t+s) \p_{xxy} + 2t \p_{xyy}$, there will be a unique choice of $s$, $t$ and $u$ that gives an element of $\cN'$.  
We first take $u$ so that the coefficient of 
$\p_{xxx}$ is zero, then choose $t$ so the 
coefficient of $\p_{xyy}$ is zero, and 
finally choose $s$ so that the coefficient of 
$\p_{xxy}$ is zero.
Thus the hypotheses of Theorem \ref{unique_C_theorem} are still met with $\mathcal{C}'$ and
$\mathcal{N}'$, and the coefficients of the principal symbol of the 
unique $N' \in \mathcal{N}'$ give three more upward invariants, for $a_{20}$, $a_{11}$ and $a_{02}$.

To get upward invariants for $a_{10}$ and $a_{01}$, we need to add another term to $\cC'$.  This term should have three free parameters, which we can choose to make the coefficients of $\p_{xx}$, $\p_{xy}$ and $\p_{yy}$ all zero.
While we could use the term 
$(\p_x + \p_y + f)(\p_x + g)(\p_y + h)$, we will instead show another possibility.
Suppose we first focus on a term that would make the coefficients of 
$\p_{xx}$ and $\p_{xy}$ zero.
Since $( 2,0 )$ and $( 1,1 )$ are both covered by the vector
$( 2,1 )$, we can use a term with 
principal symbol $\p_{xxy}$ and two free parameters, such as 
$(\p_x + f)^2 (\p_y + g)$.
But what will we use for a term that makes the
coefficient of $\p_{yy}$ zero, particularly since we should only use one more free parameter?
One solution is to reuse parameters from higher levels, for instance by adding the term
$(\p_x + h)(\p_y + r)^2$.

So this leads us to taking 
\begin{align*}
\cC'' &= \{ (\p_x + q)^3 (\p_y + r)^2 + (\p_x + \p_y + 
s)(\p_x + t)^2 (\p_y + u) + (\p_x + f)^2 (\p_y + g) \\
&\hskip1cm+ (\p_x + h)(\p_y + r)^2
\colon q,r,s,t,u \in K \}    
\end{align*}
and letting 
$\mathcal{N}''$ be the set of elements of $\mathcal{N}$ with coefficients of
zero for all derivatives of order above $1$.
The reader may verify that $\cC''$ is closed under gauge transformations.  
(This would not have been the case if we had instead added the term 
$(\p_x + h)(\p_y + q)^2$.)

Letting some $L \in \cL$ be given, we have that as before, there is a unique way to choose $q$, $r$, $s$, $t$ and $u$ in $C''$ so that $L-C'' \in \cN'$.
Ignoring the term
$(\p_x + f)^2 (\p_y + g) + (\p_x + h)(\p_y + r)^2$
for the moment, let $N'$ be the unique element of $\cN'$
determined by using $q$, $r$, $s$, $t$ and $u$ as above, and letting $f$, $g$ and $h$ be zero.
Then the coefficient of $\p_{xx}$ in $N'$ consists of $a_{20}$ minus an expression in $q$, $r$, $s$, $t$ and $u$.
For simplicity, call this coefficient $a_{20}'$, and define $a_{11}'$ and $a_{02}'$ similarly.

We now need to change $f$, $g$ and $h$ to non-zero values so that the coefficients of the degree 2 terms are zero.
Since $(\p_x + f)^2 (\p_y + g) + (\p_x + h)(\p_y + r)^2$ is 
$\p_xxy + \p_xyy + g \p_{xx} +
(2f+2r) \p_{xy} + h \p_{yy}$,
taking $f$, $g$ and $h$ non-zero gives us that the coefficients of $\p_{xx}$, 
$\p_{xy}$ and $\p_{yy}$ are 
$a_{20}' - g$, $a_{11}' - (2f + 2r)$ and $a_{02}' - h$, respectively.
There is a unique way to make these coefficients zero, we take $g = a_{20}'$,
$f = (a_{11}' - 2r)/2$ and
$h = a_{02}'$.
This uniquely determines the coefficients of $\p_x$ and $\p_y$ in our element of $\cN''$, giving upward invariants for $a_{10}$ and $a_{01}$.

This process would continue, until  $N''' \in \mathcal{N}'''$ is just a function, giving an upward invariant for the constant term.
\end{example}

\begin{example}
Consider the class that is the downward closure of $\{ \p_{xxy}, \p_{xyy} \}$. This class of operators was considered in ~\cite{invariants_gen}, where \emph{obstacles to factorizations}~\cite{obstacle2} were used to compute four Laplace invariants. A fifth invariant  was then found to complete the set but it was not a Laplace invariant. This fifth invariant had a long complicated formula and no meaning or structure. Our new method allows us to construct a complete set of Laplace invariants. 

So $n = 2$ and let $\mathcal{L}$ 
be the class of operators maximally generated by $T$, where
\begin{equation*}
    T =\{ \p_{xxy}, \p_{xyy}  \}
\end{equation*}
Let 
$L = \p_{xxy} + \p_{xyy} + a_{20} \p_{xx} + a_{11} \p_{xy} + a_{02}\p_{yy} + a_{10}\p_x + a_{01}\p_y + a_{00} \in \mathcal{L}$. 
We have two maximal invariants, but they are both $1$.
There are three submaximal terms, $a_{20} \p_{xx}$, $a_{11} \p_{xy}$ and $a_{02}\p_{yy}$.  
So there will be $3-n = 3-2$ extra invariants, and one compatibility invariant.
We have that when $L = \p_{xxy} + \p_{xyy} + a_{20} \p_{xx} + a_{11} \p_{xy} + a_{02}\p_{yy} + \dots$,
that $e^{-g} L e^g = L' = \p_{xxy} + \p_{xyy} + a'_{20} \p_{xx} + a'_{11} \p_{xy} + a'_{02}\p_{yy} + \dots$.
Here, $a'_{20} = a_{20} + g_y$, $a'_{11} = a_{11} + 2g_y + 2g_x$ and $a'_{02} = a_{02} + g_x$.
These are three equations in the two unknowns $g_x$ and $g_y$.
We get $a'_{20} - a_{20} = g_y$ and $a'_{02} - a_{02} = g_x$, which we substitute into $a'_{11} - a_{11} = 2g_y + 2g_x$ to get  
$a'_{11} - a_{11} = 2(a'_{20} - a_{20}) + 2(a'_{02} - a_{02})$.
Rearranging this, 
$a'_{11} - 2a'_{20} - 2a'_{02}  = a_{11} - 2a_{20} - 2a_{02}$.
This gives the extra invariant, 
$$I_e=a_{11} - 2a_{20} - 2a_{02} \ . $$

Returning to $a'_{20} - a_{20} = g_y$ and $a'_{02} - a_{02} = g_x$, we differentiate both and get 
$a'_{20x} - a_{20x} = g_{xy} = a'_{02y} - a_{02y}$.
Thus $a'_{20x} - a'_{02y} = a_{20x} - a_{02y}$, showing that
$$I_c=a_{20x} - a_{02y}$$ 
is a compatibility invariant.

To get upward invariants, we first take $\cC$ to be the class of operators of the form
$(\p_x + p)(\p_y + q)(\p_x + \p_y + r)$.
Expanding, these operators have the form 

\begin{equation} \label{expansion}
\begin{split}
&\p_{xxy} + \p_{xyy} + q\p_{xx} + (p+q+r)\p_{xy} + p\p_{yy} +(q_x + r_y + q(p+r))\p_x +\\ 
&(q_x + r_x + p(q+r))\p_y + ((pq+q_x)r + q r_x + p r_y + r_{xy})
\end{split}
\end{equation}

Taking $\cN = \{b_{10}\p_x + b_{01}\p_y + b_{00} \colon b_{10},b_{01},b_{00} \in K \}$,
we get $q = a_{20}$, $p = a_{02}$ and $p+q+r = a_{11}$.
The last equation becomes 
$r = a_{11} - p - q = a_{11} - a_{20} - a_{02}$.  
The coefficients $b_{10}$ and $b_{01}$ are then invariants.
We get $$I_{10} = a_{10} -
(q(p+r) + q_x +r_y)$$ and 
$$I_{01} = a_{01} -
(p(q+r) + q_x + r_x) \ .$$

Substituting in our values of $p$, $q$ and $r$, these invariants agree with those obtained given in Theorem 4 of \cite{invariants_gen}.

In particular, we have the associated representation
\begin{equation} \label{ex6:incomplete_fact}
    L = (\p_x + p)(\p_y + q)(\p_x + \p_y + r) + I_{10} \p_x + I_{01} \p_y + b_{00}
\end{equation}
for some $p,q,r,b_{00} \in K$.

To obtain an upward invariant involving $a_{00}$, we let 
$\cC'$ be the class of operators of the form
$(\p_x + p)(\p_y + q)(\p_x + \p_y + r) + (\p_x + s)(\p_y + t)$ 
and take $\cN'$ to be $K$.
The first group of equations is almost the same as before;  we have
$q = a_{20}$, $p = a_{02}$ and $p+q+r+1 = a_{11}$.
Thus $p$ and $q$ have the same values as before while the value of $r$ is 
$r' = a_{11} - a_{20} - a_{02} - 1$, one less than the previous value.
We take our expansion (\ref{expansion}), substitute $r'$ for $r$ in it and add
$(\p_x + s)(\p_y + t) = \p_{xy} + t \p_x + s \p_y + st + t_x$,
giving that the following must be in $\cN'$.
\begin{equation} \label{expansion 2}
\begin{split}
&(a_{10} -(q_x + r'_y + q(p+r') + t))\p_x +
(a_{01}-(q_x + r'_x + p(q+r') + s))\p_y\\ &+ a_{00}-(pq+q_x)r' + q r'_x + p r'_y + r'_{xy} + st + t_x)
\end{split}
\end{equation}

Thus $t = a_{10} -(q_x + r'_y + q(p+r')) = a_{10} -(q_x + r_y + q(p+(r-1))) = I_{10} + q$.
Similarly, $s = a_{01}-(q_x + r'_x + p(q+r') = I_{01} + p$.
The terms without derivation operators make our desired invariant, it is 
\begin{equation*}
I_{00} = a_{00}-((pq+q_x)r' + q r'_x + p r'_y + r'_{xy} + st + t_x) \ .
\end{equation*}
In particular, for this invariant we  have the associated representation
\begin{equation*}
    L = (\p_x + p)(\p_y + q)(\p_x + \p_y + r) + (\p_x + s)(\p_y + t)+ I_{00} 
\end{equation*}
for some $p,q,r,s,t \in K$, where again
$p,q,r$ here can be different from $p,q,r$ in~\eqref{ex6:incomplete_fact}.
\end{example}

Here is another interesting example from the literature;  its maximal terms do not all have the same degree.
A complete set of invariants for it was obtained in \cite{inv_cond_non_hyper_case}.

\begin{example}
\label{X3example}

So $n = 2$ and 
let $\mathcal{L}$ 
be the class of operators maximally generated by $T$, where
\begin{equation*}
    T = \{ \p_{xxx}, a_{11}\p_{xy}, a_{02}\p_{yy} \} \ .
\end{equation*}
So elements of $\cL$ have the form
\begin{equation}
\label{L for X^3}
\p_{xxx} + a_{20}\p_{xx} + a_{11}\p_{xy} + a_{02}\p_{yy} + a_{10}\p_x + a_{01}\p_y + a_{00}.
\end{equation}

We have that $1$, $a_{11}$, $a_{02}$ 
are maximal invariants.
Applying a gauge transformation to~\eqref{L for X^3}, 
we get 
$L' = e^{-g} L e^g = \p_{xxx} + a'_{20}\p_{xx} + a_{11}\p_{xy} + a_{02}\p_{yy} + a'_{10}\p_x + a'_{01}\p_y + a'_{00}$,
where $a'_{20} = a_{20} + 3g_x$, $a'_{10} = a_{10} + 3(g_x^2 + g_{xx}) + 2 a_{20}g_x + a_{11}g_y$ and
$a'_{01} = a_{01} + a_{11}g_x + 2a_{02}g_y$.
Solving for $g_x$ and $g_y$, we get $(a'_{20} - a_{20})/3 = g_x$ and 
$(a'_{01} - a_{01} - a_{11}(a'_{20} - a_{20})/3)/(2a_{02}) = g_y$.
Compatibility condition ($g_{xy}=g_{yx}$) then gives us 
$2a'_{20y} - 3(a'_{01}/a_{02})_x + (a_{11}a'_{20}/a_{02})_x =  
2a_{20y} - 3(a_{01}/a_{02})_x +
(a_{11}a_{20}/a_{02})_x$.
This implies that 
$$
I_{10} = 2a_{20y} - 3(a_{01}/a_{02})_x +
(a_{11}a_{20}/a_{02})_x
$$
is a compatibility invariant.

To get an upward invariant for $a_{10}$, we need at least $n = 2$ free parameters in $\cC$,
which will zero out two submaximal terms in going from $\cL$ to $\cN$.
Accordingly, we take the classes of operators 
$$
\cC = \{ 
(\p_x + p)^3 + a_{11} (\p_x + p)(\p_y + q) + a_{02} (\p_y + q)^2  \}\ ,
\ 
\cN = \{ b_{10}\p_x + b_{00} \} \ .
$$


Subtracting an element of $\cC$ 
from (\ref{L for X^3}), we get 
\begin{equation}
\label{X^3 difference}
\begin{split}
&(a_{20} - 3p) \p_{xx} +
(a_{10} - 3(p_x + p^2) - a_{11}q) \p_{x} +
(a_{01} - a_{11}p - 2a_{02}q) \p_y +\\
&a_{00} - (p^3 + 3pp_x + p_{xx} + a_{11}(pq + q_x) + a_{02}(q^2 + q_x)).
\end{split}
\end{equation}

For this to be in $\cN$, we need $a_{20} = 3p$ and
$a_{01} - a_{11}p = 2a_{02}q$.
These give $p = a_{20}/3$ and
$q = (a_{01} - a_{11}a_{20}/3)/(2a_{02})$.
The coefficient of $\p_x$ in (\ref{X^3 difference}) then becomes the invariant
\begin{equation*}
\begin{split}
I_{10} &= a_{10} - 3(p_x + p^2) - a_{11}q\\
&= a_{10} - 3((a_{20}/3)_x + (a_{20}/3)^2) - a_{11}(a_{01} - a_{11}a_{20}/3)/(2a_{02})
\end{split}
\end{equation*}

Similarly, we get the invariant $I_{01}$ by taking 
$\cN$ to be operators of the form $b_{01}\p_y + b_{00}$, and making 
(\ref{X^3 difference}) an element of this $\cN$.
This gives us $p = a_{20}/3$ as before, and 
$q = (a_{10} - 3(p_x + p^2)) / a_{11}$.
The coefficient of $\p_y$ then becomes the invariant
\begin{equation*}
\begin{split}
I_{01} &= a_{01} - a_{11}p - 2a_{02}q\\
&= a_{01} - a_{11}a_{20}/3 - 2a_{02}(a_{10} - 3((a_{20}/3)_x + (a_{20}/3)^2)) / a_{11} \ .
\end{split}
\end{equation*}
In particular, we have associated representation
\begin{equation} \label{ex7:incomplete_fact}
    L = (\p_x + p)^3 + a_{11} (\p_x + p)(\p_y + q) + a_{02} (\p_y + q)^2  + I_{10} \p_x + I_{01} \p_y + b_{00}
\end{equation}
for some $p,q,r,b_{00} \in K$.

These are not strictly speaking upward invariants, since $I_{10}$ also involves $a_{01}$ and $I_{01}$ also involves $a_{10}$.
However, they can be manipulated to yield two upward invariants.
We have 
\begin{equation*}
\begin{split}
I_{10} &= (a_{10} - a_{11}a_{01}) - (3((a_{20}/3)_x + (a_{20}/3)^2) - a_{11}a_{11}a_{20}/(6a_{02})) 
\mbox{\quad and}\\
I_{01} &= (a_{01} - 2a_{02}a_{10})
- (a_{11}a_{20}/3 - 6a_{02}((a_{20}/3)_x + (a_{20}/3)^2)) / a_{11})
\end{split}
\end{equation*}
Thus $(I_{10} + a_{11}I_{01})/(1-2a_{11}a_{02})$ is an upward invariant for $a_{10}$, and
$(I_{01} + 2a_{02}I_{10})/(1-2a_{11}a_{02})$ is an upward invariant for $a_{01}$.
(We leave the case where $(1-2a_{11}a_{02}) = 0$ to the reader.)

To get an upward invariant for $a_{00}$, we slightly modify $\cC$ by changing the $a_{11}(\p_x + p)(\p_y + q)$ term to $a_{11}(\p_x + {r})(\p_y + q)$.
This gives that $\cC'$ is the class of operators of the form 
$(\p_x + p)^3 + a_{11} (\p_x + { r})(\p_y + q) + a_{02} (\p_y + q)^2$. This changes the difference in (\ref{X^3 difference}) to become
\begin{equation}
\label{X^3 difference 2}
\begin{split}
&(a_{20} - 3p) \p_{xx} +
(a_{10} - 3(p_x + p^2) - a_{11}q) \p_{x} +
(a_{01} - a_{11}{r} - 2a_{02}q) \p_y +\\
&a_{00} - (p^3 + 3pp_x + p_{xx} + a_{11}({r}q + q_x) + a_{02}(q^2 + q_x)).
\end{split}
\end{equation}
Now we take $\cN$ to be the set of terms of the form $b_{00}$.
Then $p = a_{20}/3$ as before, and
$q = (a_{10} - 3(p_x + p^2)) / a_{11}$ as in the derivation of $I_{01}$.
Setting the coefficient of $\p_y$ equal to zero, we get 
$r = (a_{01}-2a_{02}q)/a_{11}$.
Then 
$$I_{00} = a_{00} - (p^3 + 3pp_x + p_{xx} + a_{11}(rq + q_x) + a_{02}(q^2 + q_x)).$$

In particular, we  have associated representation
\begin{equation*}
    L = (\p_x + p)^3 + a_{11} (\p_x + { r})(\p_y + q) + a_{02} (\p_y + q)^2+ I_{00} 
\end{equation*}
for some $p,q,r,s,t \in K$, where again
$p,q,r$ here can be different from $p,q,r$ in~\eqref{ex6:incomplete_fact}.

\end{example}

The next example is a simple version of one treated by Mironov in \cite{mironov2009invariants} and by Athorne and Yilmaz in \cite{AthorneYilmaz2016}.
Consider the operator that is the downward closure of $\{\p_{x_1x_2 \dots x_n} \}$ in dimension $n$.
Mironov gets invariants for the case $n = 4$, while Athorne and Yilmaz produce invariants for all cases through $n = 6$.

\begin{example}
\label{3d hyperbolic example}
Let $n=3$, and call independent variables $x$, $y$ and $z$.
Suppose $\mathcal{L}$ 
is the class of operators maximally generated by $T$, where
\begin{equation*}
    T = \{ \p_{xyz} \} \ ,
\end{equation*}
so elements of $\cL$ have the form
$\p_{xyz} + a_{110}\p_{xy} + a_{101}\p_{xz} +a_{011}\p_{yz} +
a_{100}\p_{x} +
a_{010}\p_{y} +
a_{001}\p_{z} +
a_{000}$.

Since $n = 3$ and there are $3$ submaximal terms, we have $1$ as a maximal invariant, no extra invariants, and three compatible invariants.
\begin{equation*}
\begin{split}
I_{cxy} &= a_{011y} - a_{101x}\\
I_{cxz} &= a_{011z} - a_{110x}\\ 
I_{cyz} &= a_{101z} - a_{110y}\\
\end{split}
\end{equation*}

We take $\cC$ to be the set of operators of the form
$(\p_x + p)(\p_y + q)(\p_z + r)$,
and get that $L - C$ is
\begin{equation}
\label{3d hyperbolic L-C}
\begin{split}
&(a_{110}-r)\p_{xy} + 
(a_{101}-q)\p_{xz} +
(a_{011}-p)\p_{yz} +
(a_{100}-qr-r_y)\p_{x} +\\
&(a_{010}-pr-r_x)\p_{y} +
(a_{001}-pq-q_x)\p_{z} +\\
&a_{000}-(pqr+ r q_x + p r_y + q r_x + r_{xy})
\end{split}
\end{equation}

Taking $\cN$ to be the set of operators of the form
$b_{100}\p_x + b_{010}\p_y + b_{001}\p_z + b_{000}$,
we get 
$$ p = a_{011} \hspace{2cm} 
q = a_{101} \hspace{2cm} 
r = a_{110}.
$$
The coefficients of $\p_x$, $\p_y$ and $\p_z$ now give us the three upward invariants
\begin{equation*}
\begin{split}
&I_{100} = a_{100}-a_{101} a_{110}-a_{110y}\\
&I_{010} = a_{010}- a_{011} a_{110}-a_{110x}\\
&I_{001} = a_{001}-a_{011} a_{101}-a_{101x}
\end{split}
\end{equation*}
These are essentially the same as those in the literature, and they correspond to the representation
\begin{equation*}
    L = I_{100}\p_x + I_{010}\p_y + I_{001}\p_z + b_{000}
\end{equation*}
for some $b_{000} \in K$.

To get $I_{000}$, there are several possibilities for a class of expressions to add to those in $\cC$.
\begin{equation*}
\begin{split}
&(\p_x + s)(\p_y + t) + 
(\p_x + s)(\p_z + u) +
(\p_y + t)(\p_z + u) =\\
&\p_{xy} + \p_{xz} + \p_{yz} +
(t+u)\p_x + (s+u)\p_y +
(s+t)\p_z +\\ 
&(st + t_x + su + u_x + tu + u_y)
\end{split}    
\end{equation*}
would work, but substituting $s$, $t$ and $u$ into $st + t_x + su + u_x + tu + u_y$ could produce complicated expressions.
So we use the following, which avoids products such as $st$.
\begin{equation}
\label{hyperbolic second term}
\begin{split}
&(\p_x + s)(\p_y + q) + 
(\p_y + t)(\p_z + r) +
(\p_z + u)(\p_x + p) =\\
&\p_{xy} + \p_{xz} + \p_{yz} +
(q+u)\p_x + (r+s)\p_y +
(p+t)\p_z +\\ 
&(sq + q_x + tr + r_y + pu + p_z)
\end{split}    
\end{equation}
Note that the order of the factors in the three terms is chosen to produce 
$q_x$, $r_y$ and $p_z$.  The other order would produce $s_x$, $t_y$ and $u_z$, which would yield more complicated expressions.

So we let $\cC'$ be the set of operators of the form
$$
(\p_x + p)(\p_y + q)(\p_z + r) +
(\p_x + s)(\p_y + q) + 
(\p_y + t)(\p_z + r) +
(\p_z + u)(\p_x + p)
$$
This gives us that $L-C$ is
\begin{equation}
\label{hyperbolic second L-C}
\begin{split}
&(a_{110}-r-1)\p_{xy} + 
(a_{101}-q-1)\p_{xz} +
(a_{011}-p-1)\p_{yz} +\\
&(a_{100}-qr-r_y-q-u)\p_{x} +
(a_{010}-pr-r_x-r-s)\p_{y} +\\
&(a_{001}-pq-q_x-p-t)\p_{z} +\\
&a_{000}-(pqr+ r q_x + p r_y + q r_x + r_{xy} + sq + q_x + tr + r_y + pu + p_z)
\end{split}    
\end{equation}
Letting $\cN'$ be the set of operators of the form $b_{000}$, we have 
$$ p = a_{011} - 1 \hspace{2cm} 
q = a_{101} - 1\hspace{2cm} 
r = a_{110} - 1.
$$
Next we get 
\begin{equation*}
\begin{split}
&s = a_{010}-pr-r_x-r\\
&t = a_{001}-pq-q_x-p\\
&u = a_{100}-qr -r_y-q
\end{split}    
\end{equation*}

Substituting these all into the constant term of (\ref{hyperbolic second L-C}), we get 
\begin{equation*}
\begin{split}
I_{000} = a_{000}-(&pqr+ r q_x + p r_y + q r_x + r_{xy} + sq + q_x + tr + r_y + pu + p_z) \ ,
%
\end{split}    
\end{equation*}
which is associated to the representation
\begin{equation*}
    L = (\p_x + p)(\p_y + q)(\p_z + r) +
(\p_x + s)(\p_y + q) + 
(\p_y + t)(\p_z + r) +
(\p_z + u)(\p_x + p) + I_{000} 
\end{equation*}
for some $p,q,r,s,t,u \in K$.

Unlike Athorne and  Yilmaz's
technique, our method does not naturally produce invariants that are symmetric in all variables.
Adding $I_{cxz} - (1/3)(I_{cxz})_y -
(1/3)(I_{cyz})_x - I_{100} - I_{010}
-I_{001} - 1$ and simplifying, we get
$$
a_{000}-(a_{100}a_{011}+ a_{010}a_{101} + a_{001}a_{110} 
- 2a_{011}a_{101}a_{110} +
(a_{110xy} + a_{101xz}
+ a_{011yz})/3)
$$
This is essentially the same as the corresponding invariant in~\cite{Athorne2018}.
\end{example}
\begin{example}
\label{inductive hyperbolic example}
As an interesting application, we can also obtain an inductive definition of upward invariants for the bottom terms of any totally hyperbolic operator as in (\ref{3d hyperbolic example}).
We let $\cL_n$ be the downward closure of $\p_{x_1x_2 \dots x_n}$ for some $n \ge 2$.
As noted in~\cite{Athorne2018}, the form of an upward invariant for a given term $t$ only depends on how far it is below the maximal element of $\cL_n$.
This is because an upward invariant for $t$ only depends on the coefficient of $t$ and terms above it, so an upward invariant for any term the same distance below $\p_{x_1x_2 \dots x_n}$ as $t$ is can be obtained by substituting the corresponding coefficients in an upward invariant for $t$.
For example, when $n=2$, an upward invariant for $a_{00}$ is 
$$I_{00} = a_{00} - (a_{10}a_{01} + a_{10x_1}),$$
and for any $n \ge 2$, upward invariants for terms that are two levels below $\p_{x_1x_2 \dots x_n}$ can be obtained by substitution in it.
For example, when $n=3$ we have the invariant
$$I_{001} = a_{001} - (a_{101}a_{011} + a_{101 x_1}).$$

This means that once we have upward invariants for the bottom terms for every $n \ge 2$, we can easily construct a complete set of invariants for any $n$.
So suppose we have some $n \ge 2$ and an upward invariant for the bottom term of $\cL_n$,
$$I_{00\dots0} = a_{00\dots0} - E,$$
where $E$ is an expression in the other coefficients of $\cL_n$.
For clarity, let us denote the coefficients in $\cL_n$ using $b$'s, while still using $a$'s for the coefficients in $\cL_{n+1}$.

Now we seek an upward invariant for the bottom term in $\cL_{n+1}$, and accordingly let $\cC$ be the class of operators of the form
$$(\p_z + p) \cL_n
= (\p_z + p)(\p_{x_1x_2\dots x_n} + b_{011\dots1} \p_{x_2x_3\dots x_n} \dots  + b_{00\dots0}),$$
where we use $z$ to denote $x_{n+1}$.
We will take $\cN$ to be the class of all operators in $\cL_{n+1}$ that have $a_{11\dots 110} = 1$ and 
all terms involving $\p_z$ equal to zero.  
Note that $\cN$ is the same as $\cL_n$, except that its coefficients have different names.
In particular, we have the invariant 
\begin{equation}
\label{invariant for N}
I_{00\dots0} = b_{00\dots0} - E
\end{equation}
where $E$ is an expression in the $b_\alpha$ for $\alpha$ an $n$-long string of $0$'s and $1$'s that is not all $0$'s.

Expanding $\cC$, we have that it is
\begin{equation}
\label{C for recursion}
\begin{split}
&\p_{x_1x_2 \dots x_n z} +
 b_{011\dots 1}\p_{x_2x_3 \dots x_n z} + \dots
b_{00\dots 0} \p_z +\\
& p \p_{x_1x_2 \dots x_n} +
(p b_{011\dots 1} + b_{011\dots 1z})\p_{x_2x_3 \dots x_n} + \dots
(p b_{00\dots 0} + b_{00\dots 0z})\\
\end{split}
\end{equation}
Thus to make $L-C$ be in $\cN$, we take $b_{\alpha} = a_{\alpha 1}$ for each $n$-long string of $0$'s and $1$'s $\alpha$ other than $11 \dots 11$, and we also take
$p = a_{11 \dots 110} - 1$.

This gives us that $N = L-C$ is
\begin{equation}
\label{N for recursion}
\begin{split}
& \p_{x_1x_2 \dots x_n} +
(a_{011\dots 110} - p b_{011\dots 1} - b_{011\dots 1z})\p_{x_2x_3 \dots x_n} + \dots
(a_{00 \dots 00} - p b_{00\dots 0} - b_{00\dots 0z})\\
\end{split}
\end{equation}
Substituting the coefficients of (\ref{N for recursion}) into (\ref{invariant for N}) gives the desired upward invariant for the bottom term of $\cL_{n+1}$.

For example, when $n = 2$ we have 
\begin{equation}
\label{level 2 invariant}
I_{00} = a_{00} - (a_{10}a_{01} + a_{10 x_1}).
\end{equation}
To get an invariant for $n=3$ from this, we have $p = a_{110} - 1$, $b_{ij} = a_{ij1}$, and 
$a_{ij} = a_{ij0} - p b_{ij} - b_{ijz} = 
a_{ij0} - (a_{110}-1) a_{ij1} - a_{ij1z}$.
Substituting these into (\ref{level 2 invariant}), we get
\begin{equation}
\begin{split}
I_{000} = &(a_{000} - ( (a_{110}-1) a_{001} + a_{001z}) -\\
&((a_{100}- (a_{110}-1) a_{101} - a_{101z})
(a_{010}- (a_{110}-1) a_{011} - a_{011z})+\\
&(a_{100}- (a_{110}-1) a_{101} - a_{101z})_{x_1}),\\
\end{split}    
\end{equation}
where we are using $z$ for $x_3$.

While this is not a symmetrical expression in the coefficients, this recursive definition could be of 
interest.
\end{example}

\section{Complete sets of Laplace invariants: a constructive proof}
\label{sec:proof} 

We will be working toward a proof that when
$\cL$ is maximally generated and approximately flat, that there is a complete set of invariants for $\cL$.  
Our construction of invariants will start with the highest degree terms in $\cL$, and work down.
We will of course include the maximal invariants in our complete set of invariants $I$.
Next, we have the following.

\begin{definition}\label{Delta_definition}
Let $\cL$ be maximally generated, let $L$ and $L'$ be two arbitrary elements of $\cL$.
Assume $L'$ is a gauge transform of $L$, so $L' = e^{-g} L e^g$ for some $g \in K$.  
Let $E$ be some 
expression in coefficients of $L$ (which may involve algebraic operations and differentiation), and let $E'$ be the same expression in the corresponding coefficients in $L'$.
Then the {\em difference} of $E$, $\Delta E$, is given by $\Delta E = E' - E$.
\end{definition}

\begin{theorem}\label{Delta_properties_theorem}
Let $E$ and $F$ be expressions as above, and let $a_{\vecc{v}} \p_{\vecc{v}}$ be a term of $\cL$.
Then the following hold.
\begin{enumerate}
    \item $\Delta (E+F) = \Delta E + \Delta F$.
    \item $\Delta (E_{x_i}) = (\Delta E)_{x_i}$ for any variable $x_i$.
    \item $E$ is invariant iff $\Delta E = 0$ for all $L$ and $L'$.
\end{enumerate}
\end{theorem}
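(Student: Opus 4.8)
The plan is to verify each of the three properties directly from Definition~\ref{Delta_definition}, which defines $\Delta E = E' - E$ where $E'$ is the result of applying a fixed gauge transformation $L \mapsto L' = e^{-g}Le^g$ to the coefficients appearing in $E$.

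For part (1), I would observe that the gauge action is performed coefficient-by-coefficient: the expression $E+F$, evaluated on $L'$, is literally $E' + F'$ because substituting the primed coefficients into a sum of two expressions gives the sum of the two substituted expressions. Hence $\Delta(E+F) = (E'+F') - (E+F) = (E'-E) + (F'-F) = \Delta E + \Delta F$. This is essentially a triviality about substitution commuting with addition.

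For part (2), the key point is that the primed coefficients are functions in $K$, and differentiation $\partial/\partial x_i$ is an operation in the differential field $K$ that does not see the gauge transformation at all: if $E$ is an expression whose value on $L$ is some element $e \in K$ and whose value on $L'$ is $e' \in K$, then the value of $E_{x_i}$ on $L$ is $\partial_{x_i} e$ and on $L'$ is $\partial_{x_i} e'$. Therefore $\Delta(E_{x_i}) = \partial_{x_i} e' - \partial_{x_i} e = \partial_{x_i}(e' - e) = \partial_{x_i}(\Delta E) = (\Delta E)_{x_i}$, using linearity of the derivation. The only thing to be slightly careful about is making explicit that ``the same expression in the corresponding coefficients'' genuinely means we first substitute and then differentiate, so that the two operations are performed on elements of $K$ and linearity applies.

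For part (3), I would unwind the definition of invariant given in Section~\ref{sec:complete}: $E$ is an invariant of $\cL$ iff any two gauge-equivalent operators $L, L'$ have the same value of $E$, i.e. $E = E'$, i.e. $E' - E = 0$, i.e. $\Delta E = 0$ --- and this must hold for every such pair $L, L'$, which is exactly ``$\Delta E = 0$ for all $L$ and $L'$.'' So this is a direct restatement. I do not anticipate any real obstacle here; the ``hard part,'' such as it is, is purely bookkeeping --- phrasing (2) so that it is clear the derivation acts on the already-substituted element of $K$ rather than formally on the symbolic expression, and noting that every gauge transformation is given by some $g \in K$ so the quantifier ``for all $L$ and $L'$'' in (3) matches the quantifier in the definition of invariant. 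All three parts follow immediately once the definitions are laid side by side.
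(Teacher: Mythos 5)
Your verification is correct and is exactly the routine unwinding of Definition~\ref{Delta_definition} that the paper has in mind when it dismisses the proof as straightforward (the paper offers no further detail). Your care in part (2) about substituting first and then differentiating, so that linearity of the derivation on $K$ applies, is the right point to make explicit.
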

The proof is straightforward.


In Examples \ref{submaximal_example} through \ref{X3example}, we produced compatability invariants by solving a system of equations for the derivatives $g_{x_i}$.  
We need to carefully examine when this can be done.
With assumptions as in Definition \ref{Delta_definition}, make the additional assumption that $\cL$ is approximately flat.  
Let $s$ be the number of submaximal terms in $\cL$, and let $\vecc{v}$ be one of the $s$ submaximal vectors.
For this $\vecc{v}$, let $T$ be the set of $i$ so that $\vecc{v}+e_i$ is the vector of a maximal term.
In this situation, an easy calculation shows that
\begin{equation} \label{a_sub_v_equation}
\Delta a_{\vecc{v}} = \sum_{i \in T}\; (\vecc{v}(i) +1)
a_{\vecc{v}+\vecc{e}_i}\; g_{x_i} =
(\sum_{i \in T}\; (\vecc{v}(i) +1) a_{\vecc{v}+\vecc{e}_i} \;\vecc{e}_i)\cdot \nabla g = \phi(\vecc{v}) \cdot \nabla g
\end{equation}
Here, $\nabla g = (g_{x_1}, g_{x_2}, \dots g_{x_n})$ is the vector of partial derivatives of $g$, and 
$\phi(\vecc{v})$ denotes the vector $(\sum_{i \in T}\; (\vecc{v}(i) +1) a_{\vecc{v}+\vecc{e}_i} \;\vecc{e}_i)$
for this submaximal vector $\vecc{v}$ and its set $T$.

There will be $s$ equations of this form, one for each submaximal vector $\vecc{v}$.
We need a condition on the set of vectors $\phi(\vecc{v})$ so that some $n$ of the vectors in the set yield equations \eqref{a_sub_v_equation} that determine $\nabla g$.
This condition is obviously that the entire set of the $\phi(\vecc{v})$ spans $K^n$.  

\begin{definition} \label{framed_definition}
For a maximally generated class $\cL$, let $S$ be the set of its submaximal vectors.  For each $\vecc{v} \in S$, let 
$\phi(\vecc{v})$ be the vector $(\sum_{i \in T}\; (\vecc{v}(i) +1) a_{\vecc{v}+\vecc{e}_i}\; \vecc{e}_i)$ as above.  We say that $\cL$ is {\em framed} iff $\{ \phi(\vecc{v}) \colon \vecc{v} \in S \}$ spans $K^n$. 
\end{definition}



Assuming $\cL$ is framed, we have a set of $n$ equations of the form
$\Delta a_{\vecc{v}} = \sum_{i \in S}\; (\vecc{v}(i) +1)
a_{\vecc{v}+\vecc{e}_i}\; g_{x_i}$
that determine all of the derivatives 
$g_{x_i}$ in terms of $n$ of the 
$\Delta a_{\vecc{v}}$.
In addition to these $n$ equations, we have $s - n$ ``extra'' equations which give other of the $\Delta a_{\vecc{v}}$ as linear expressions in the $g_{x_i}$ with invariant coefficients.

All of the above equations yield invariants.
To simplify notation, we illustrate this by letting $n = 3$, calling the three variables $x$, $y$ and $z$, letting $a$, $b$ and $c$ be coefficients of submaximal terms, and letting $\alpha$, $\beta$, $\gamma$ and $\delta$ be invariant coefficients.
Then from the $n$ equations that look like 
$g_x = \alpha \Delta a + \beta \Delta b$, 
$g_y = \gamma \Delta b + \delta \Delta c$, and so on, we construct invariants as follows.
Differentiating, and setting compatibility partials equal, we get equations like
$(\alpha \Delta a + \beta \Delta b)_y = g_{xy} = (\gamma \Delta b + \delta \Delta c)_x$.
This becomes
$\alpha_y \Delta a + \alpha \Delta a_y + \beta_y \Delta b + \beta \Delta b_y =
\gamma_x \Delta b + \gamma \Delta b_x + \delta_x \Delta c + \delta \Delta c_x$.
Which is 
$\alpha_y (a'-a) + \alpha (a_y' - a_y) + \beta_y (b'- b) + \beta (b_y'- b_y) =
\gamma_x (b'- b) + \gamma (b_x'- b_x) + \delta_x (c'- c) + \delta (c_x'- c_x)$.
Rearranging, we get 
$\alpha_y a' + \alpha a_y' + \beta_y b' + \beta b_y' - 
(\gamma_x b' + \gamma b_x' + \delta_x c' + \delta c_x')=
\alpha_y a + \alpha a_y + \beta_y b + \beta b_y - 
(\gamma_x b + \gamma b_x + \delta_x c + \delta c_x)$, showing that 
$(\alpha_y a + \alpha a_y + \beta_y b + \beta b_y) - 
(\gamma_x b + \gamma b_x + \delta_x c + \delta c_x)$
is an invariant.
In general there are $n (n-1)/2$ compatibility invariants that look like this, one for each pair of variables.

For the $s-n$ ``extra'' equations which look like
$\alpha \Delta a + \beta \Delta b = \gamma \Delta b + \delta \Delta c$, we proceed as follows.  
We expand the expressions with $\Delta$, and get 
$\alpha (a'-a) + \beta (b'- b) = \gamma (b'- b) + \delta (c'-c)$.
Rearranging,
$\alpha a' + \beta b' - \gamma b' - \delta c' =
\alpha a + \beta b - \gamma b - \delta c $, which makes 
$\alpha a + \beta b - \gamma b - \delta c$ invariant.

The above discussion gives us the following.
\begin{theorem}\label{enough_top_invariants_theorem}
If $\cL$ is maximally generated, approximately flat, and framed, it has all of the maximal, extra and compatibility invariants needed to produce a set of invariants that is complete by Theorem \ref{sufficient_for_a_complete_set_theorem}.
\end{theorem}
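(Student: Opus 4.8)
The plan is to turn the discussion preceding the statement into an explicit construction: produce by hand the maximal, extra, and compatibility invariants, and then invoke Theorem~\ref{sufficient_for_a_complete_set_theorem}. The maximal invariants need nothing new: since $\cL$ is maximally generated with set of maximal terms $T$, conjugation $L \mapsto e^{-g} L e^{g}$ produces only terms below the one being conjugated, and nothing lies above a maximal term, so the coefficient of each maximal term is unchanged; hence the coefficients of the terms in $T$ are invariants, which is condition~(1) of Theorem~\ref{sufficient_for_a_complete_set_theorem}. Everything else is read off the linear system~\eqref{a_sub_v_equation}. First I would note that, for a submaximal vector $\vecc{v}$, \eqref{a_sub_v_equation} really records \emph{all} of $\Delta a_{\vecc{v}}$: because $\cL$ is downward closed, the only terms above a submaximal $\vecc{v}$ are the maximal terms that cover it, so no higher-order derivatives of $g$ can enter, and the contribution of each covering maximal term $a_{\vecc{v}+\vecc{e}_i}\p^{\vecc{v}+\vecc{e}_i}$ to the coefficient of $\p^{\vecc{v}}$ is exactly $(\vecc{v}(i)+1)\,a_{\vecc{v}+\vecc{e}_i}\,g_{x_i}$.

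Next I would solve for $\nabla g$. Since $\cL$ is framed, the vectors $\{\phi(\vecc{v})\}$ span $K^{n}$, so one may pick submaximal vectors $\vecc{v}_1,\dots,\vecc{v}_n$ with $\phi(\vecc{v}_1),\dots,\phi(\vecc{v}_n)$ a $K$-basis of $K^{n}$. Each entry of $\phi(\vecc{v}_j)$ is, up to a nonzero integer factor coming from the prime field, a coefficient of a maximal term, hence an invariant; so by Cramer's rule the inverse of the $n\times n$ coefficient matrix has entries in $K$ that are themselves invariant, and solving the chosen $n$ equations of~\eqref{a_sub_v_equation} gives expressions $g_{x_i} = \sum_{j} c_{ij}\,\Delta a_{\vecc{v}_j}$ with every $c_{ij}$ an invariant of $\cL$.

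From here the extra and compatibility invariants fall out of the $\Delta$-calculus of Theorem~\ref{Delta_properties_theorem}. For each of the remaining $s-n$ submaximal vectors $\vecc{w}$ I would substitute these expressions into~\eqref{a_sub_v_equation}, obtaining $\Delta a_{\vecc{w}} = \sum_{j} d_{j}\,\Delta a_{\vecc{v}_j}$ with the $d_j \in K$ invariant; then, using part~(1) of Theorem~\ref{Delta_properties_theorem} together with $\Delta d_j = 0$, the quantity $a_{\vecc{w}} - \sum_{j} d_j a_{\vecc{v}_j}$ has vanishing $\Delta$ and is therefore an invariant, yielding $s-n$ extra invariants, i.e.\ condition~(2). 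For the compatibility invariants I would, for each pair $i<k$, differentiate $g_{x_i} = \sum_j c_{ij}\,\Delta a_{\vecc{v}_j}$ with respect to $x_k$ and $g_{x_k} = \sum_j c_{kj}\,\Delta a_{\vecc{v}_j}$ with respect to $x_i$, and equate using $g_{x_i x_k} = g_{x_k x_i}$; since the $c_{ij}$ and their derivatives are invariant and $\Delta$ commutes with $\p_{x_i}$ (part~(2) of Theorem~\ref{Delta_properties_theorem}), this rearranges into $\Delta(\,\cdot\,)=0$ for an explicit expression in the $a_{\vecc{v}_j}$ and their first derivatives, which is then an invariant by part~(3) of Theorem~\ref{Delta_properties_theorem}; there is exactly one for each of the $n(n-1)/2$ unordered pairs of variables, giving condition~(3). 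With (1)--(3) in hand, the conclusion is exactly what Theorem~\ref{sufficient_for_a_complete_set_theorem} requires beyond the upward invariants.

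I expect the only real point to watch is bookkeeping rather than a genuine obstacle: one must track precisely which quantities count as ``constants'' for $\Delta$ — the entries of the $\phi(\vecc{v})$, the $c_{ij}$, the $d_j$, and all of their iterated $x$-derivatives — and check that each is invariant. This is where the hypotheses enter: framedness guarantees the coefficient matrix is invertible over $K$, so the $c_{ij}$ lie in $K$ and Cramer's rule keeps them invariant, while approximate flatness both ensures $s \ge n$ (so the counts in (2)--(3) are meaningful) and is the hypothesis under which Theorem~\ref{sufficient_for_a_complete_set_theorem} is stated and hence applicable. The fact that the differentiations above preserve invariance is then immediate from parts~(1) and~(2) of Theorem~\ref{Delta_properties_theorem}.
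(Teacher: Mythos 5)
Your proposal is correct and follows essentially the same route as the paper, whose ``proof'' is precisely the discussion preceding the theorem: use framedness to solve the linear system \eqref{a_sub_v_equation} for $\nabla g$ with invariant coefficients, read off the $s-n$ extra invariants from the remaining equations, and get the $n(n-1)/2$ compatibility invariants from equality of mixed partials, all via the $\Delta$-calculus. The only difference is presentational: you carry out the general-$n$ argument explicitly (Cramer's rule, invariance of the $c_{ij}$), where the paper illustrates the same computation with an $n=3$ template.
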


Our next step is to produce a set of upward invariants for all the terms that are not maximal or submaximal.  
Our method will be to repeatedly apply Theorem \ref{unique_C_theorem}.
The first step is to construct a class $\cC$ so that the unique $N$ produced as in Theorem \ref{unique_C_theorem} is in the class $\cN$ of operators in $\cL$ that have all their coefficients of maximal and submaximal terms equal to zero.  
Our construction will have a distinguished set of submaximal terms, which need a certain property so that we can use them to build a ``framework''.

\begin{definition}\label{framing_set_definition}
Let $\cL$ be a maximally generated class of operators that is approximately flat and framed.
Then the set $S$ of vectors of submaximal terms of $\cL$ is such that
$\{ \phi(\vecc{v}) \colon \vecc{v} \in S \}$ spans $K^n$.
This is equivalent to there being an $n$-element subset  
$\{ \vecc{v}_1, \vecc{v}_2, \dots \vecc{v}_n \}$ of $S$ where 
the set $\{ \phi(\vecc{v}_1), \phi(\vecc{v}_2), \dots \phi(\vecc{v}_n) \}$
is linearly independent.
We call such a set $\{ \vecc{v}_1, \vecc{v}_2, \dots \vecc{v}_n \}$ of submaximal vectors a {\em framing set} for $\cL$.
\end{definition}

The vast majority of operators in the literature give maximally generated classes that have framing sets and are thus framed.

\begin{theorem} \label{single_maximal_implies_framed_theorem}
Let $\cL$ be the class of operators that is generated by a single nonzero term $a_{\vecc{v}} \p^{\vecc{v}}$, where $\vecc{v}(i) > 0$ for all $i \leq n$.
Then $\cL$ is framed. 
\end{theorem}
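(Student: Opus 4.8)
The plan is to unwind the definitions of \emph{submaximal term} and of the vectors $\phi(\vecc{v})$ for a class generated by a single term, and then observe that the resulting vectors are nonzero scalar multiples of the standard basis of $K^n$, which span.

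First I would pin down the submaximal vectors. Write $\vecc{v}=(v_1,\dots,v_n)$ with every $v_i>0$. Since $\cL$ is generated by the single term $a_{\vecc{v}}\p^{\vecc{v}}$, its unique maximal vector is $\vecc{v}$ and the terms of $\cL$ are exactly the $\vecc{u}$ with $\vecc{u}\preceq\vecc{v}$. In the componentwise order, the vectors covered by $\vecc{v}$ are precisely $\vecc{v}-\vecc{e}_i$ for the $i$ with $v_i\ge 1$, hence all of $\vecc{v}-\vecc{e}_1,\dots,\vecc{v}-\vecc{e}_n$. Each $\vecc{v}-\vecc{e}_i$ is genuinely submaximal: the vectors covering it are $\vecc{v}-\vecc{e}_i+\vecc{e}_j$, and for $j=i$ this is the maximal $\vecc{v}$, while for $j\neq i$ one has $(\vecc{v}-\vecc{e}_i+\vecc{e}_j)(j)=v_j+1>v_j$, so $\vecc{v}-\vecc{e}_i+\vecc{e}_j\not\preceq\vecc{v}$ and is not a term of $\cL$ at all. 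Thus $\vecc{v}-\vecc{e}_i$ is covered by nothing but the maximal term, and the set of submaximal vectors is $S=\{\vecc{v}-\vecc{e}_i \colon 1\le i\le n\}$, a set of $n$ distinct vectors.

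Next I would evaluate $\phi$ on each of them. Fix $i$ and set $\vecc{w}=\vecc{v}-\vecc{e}_i$. By the previous step the index set $T$ of Definition~\ref{framed_definition} (the $j$ with $\vecc{w}+\vecc{e}_j$ maximal) is exactly $\{i\}$, so
$$\phi(\vecc{w}) = (\vecc{w}(i)+1)\,a_{\vecc{w}+\vecc{e}_i}\,\vecc{e}_i = v_i\,a_{\vecc{v}}\,\vecc{e}_i,$$
using $\vecc{w}(i)=v_i-1$ and $\vecc{w}+\vecc{e}_i=\vecc{v}$. Since $K$ has characteristic zero the integer $v_i$ is a nonzero element of $K$, and $a_{\vecc{v}}\neq 0$ by hypothesis, so $\phi(\vecc{v}-\vecc{e}_i)$ is a nonzero multiple of $\vecc{e}_i$. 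Hence $\{\phi(\vecc{v})\colon\vecc{v}\in S\}$ contains nonzero scalar multiples of all $n$ standard basis vectors and therefore spans $K^n$, i.e.\ $\cL$ is framed.

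There is no serious obstacle here beyond bookkeeping; the one point requiring care is the verification that each $\vecc{v}-\vecc{e}_i$ is truly submaximal — that it is covered by no \emph{non}-maximal term — and this is exactly where the hypothesis $v_i>0$ for all $i$ is used (so that each $\vecc{v}-\vecc{e}_i$ exists, so that the competing cover $\vecc{v}-\vecc{e}_i+\vecc{e}_j$ fails to lie below $\vecc{v}$, and so that one obtains a full $n$-element framing set rather than a deficient one).
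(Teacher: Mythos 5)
Your proof is correct and follows essentially the same route as the paper's: identify the submaximal vectors as $\vecc{v}-\vecc{e}_i$ and observe that each $\phi(\vecc{v}-\vecc{e}_i)=v_i\,a_{\vecc{v}}\,\vecc{e}_i$ is a nonzero multiple of $\vecc{e}_i$, so these $n$ vectors form a framing set. The paper states this in one sentence; you have simply supplied the bookkeeping (verifying submaximality and the nonvanishing of $v_i a_{\vecc{v}}$) explicitly.
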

\begin{proof}
A framing set consists of 
the $n$ vectors for submaximal terms
$\{ \vecc{v} - \vecc{e}_i \colon i \leq n \}$, since we have that each
$\phi(\vecc{v} - \vecc{e}_i)$ is a nonzero multiple of $\vecc{e}_i$.
\end{proof}

\begin{example} \label{not_framed_example}
Here is an example of a class $\cL$ that is not framed.
We take $n = 2$, write $x$ for $x_1$ and $y$ for $x_2$.  
Then we let $\cL$ be  maximally generated by 
$\{ \p_{xx}, 2 \p_{xy}, \p_{yy} \}$, so operators in $\cL$ have the form
$\p_{xx} + 2 \p_{xy} + \p_{yy} + a_{10} \p_{x} + a_{01} \p_{y} + a_{00}$.
There are two submaximal vectors, $( 1,0 )$ and $( 0,1 )$, and 
$\phi(( 1,0 )) = \phi(( 0,1 )) = ( 2,2 )$.

\end{example}

\begin{theorem}\label{zero_max_and_submax_theorem}
Let $\cL$ be maximally generated, approximately flat, and framed.
Let $\cN$ be the class of $L \in \cL$ where all the coefficients of maximal and submaximal terms of $L$ are zero.
Then there is a class $\cC$ of operators so that for every $L \in \cL$ there is a unique $C \in \cC$ so that $N = L-C$ is in $\cN$.
\end{theorem}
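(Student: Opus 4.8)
The plan is to build $\cC$ by layers, working down the degree filtration on terms, so that at each layer we add enough free parameters to zero out exactly the coefficients of terms at that degree, and then invoke Theorem~\ref{unique_C_theorem} (or rather, the uniqueness mechanism behind it) layer by layer. Concretely, let $d$ be the highest degree of a maximal term, and process the degrees $d, d-1, \dots$ in decreasing order. At the top, $\cC$ should contain a family whose principal part matches $\Lead_{\preceq}$-type data of $\cL$ and whose lower coefficients involve free functions; subtracting it kills the maximal coefficients trivially (there are no free parameters needed there since maximal coefficients are already invariants — strictly, $\cN$ is defined to have them zero, so we really want $\cL$ to already impose that the maximal coefficients match those of the chosen template, which they do up to the single scaling already fixed). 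Then the crucial layer is the submaximal one.

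The submaximal layer is where the \emph{framed} hypothesis is used, and I expect this to be the main obstacle. We have $s$ submaximal vectors and, by Definition~\ref{framed_definition}, the vectors $\phi(\vecc{v})$ span $K^n$, so a framing set $\{\vecc{v}_1,\dots,\vecc{v}_n\}$ exists. The idea is to attach to $\cC$ a term (or product of first-order factors) carrying $n$ free functions $g_1,\dots,g_n$, arranged so that the induced change in the submaximal coefficient indexed by $\vecc{v}_j$ is governed by $\phi(\vecc{v}_j)\cdot(g_1,\dots,g_n)$ — exactly the structure of equation~\eqref{a_sub_v_equation} but now with the $g_i$ as genuinely free parameters rather than derivatives of a single $g$. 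Since $\{\phi(\vecc{v}_j)\}$ is linearly independent, the $n\times n$ system is invertible over $K$, so the $n$ parameters are uniquely determined by the requirement that the submaximal coefficients with vectors in the framing set become zero. The remaining $s-n$ submaximal coefficients are \emph{not} free to zero out with these parameters — but that is fine: approximate flatness together with the extra invariants from Theorem~\ref{enough_top_invariants_theorem} is what controls them, and in fact we should choose $\cN$ to only require zeroing the framing-set submaximal coefficients, OR we add $s-n$ more free parameters in lower-order factors. I would take the former reading if the statement permits; otherwise, after fixing the $n$ framing parameters, add $s-n$ auxiliary first-order-factor terms with principal symbols below the non-framing submaximal vectors and one free parameter each, chosen in a compatible order (lowest-degree principal symbol first is the wrong order; rather, pick an ordering of the non-framing submaximal vectors by $\preceq$ and zero them from the top down so each new parameter affects only its own coefficient plus already-zeroed ones).

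Once the maximal and submaximal coefficients are zero, we descend: at each lower degree $k$, list the terms of degree $k$ that are neither maximal nor submaximal, and for each add to $\cC$ a summand whose principal symbol $\p^{\vecc{w}}$ lies strictly above that term's vector (possible because every non-maximal, non-submaximal term is below some maximal term through a chain, so there is a covering term of higher degree whose monomial we can use), carrying exactly as many free parameters as there are degree-$k$ terms to zero. As in Examples~\ref{complicated example} and~\ref{3d hyperbolic example}, one must order the choices so that each parameter, once fixed, does not disturb coefficients already set to zero — achievable by solving for the parameter attached to a term and then substituting downward, since the new summand only affects coefficients of terms below its principal symbol. Reusing higher-level parameters (as in $\cC''$ of Example~\ref{complicated example}) is permitted and sometimes necessary to keep $\cC$ closed under gauge transformations; I would verify gauge-closure of $\cC$ as a final check, noting that each summand is a product of factors of the form $(\p_i + h)$ or fixed-coefficient operators, and gauging conjugates each $(\p_i + h)$ to $(\p_i + h + g_{x_i})$, staying in the same family. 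The induction terminates at degree $0$, at which point $L-C \in \cN$ by construction and $C$ is unique because at every layer the linear (or triangular) system for the new parameters had a unique solution over $K$.

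The main obstacle, to restate, is the submaximal layer: ensuring the $n$ new parameters genuinely control the framing-set submaximal coefficients through an invertible system, and handling the $s-n$ leftover submaximal coefficients without breaking uniqueness — this is precisely where \emph{framed} plus \emph{approximately flat} must be combined, and where the order of operations needs care. Everything below the submaximal layer is a routine (if notation-heavy) triangular elimination.
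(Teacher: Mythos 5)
Your overall architecture matches the paper's: one constructs $\cC$ from products of first‑order factors attached to the maximal terms, determines $n$ distinguished parameters $c_1,\dots,c_n$ from the framing set via the invertible linear system $\phi(\vecc{v}_j)\cdot\vecc{c}=a_{\vecc{v}_j}$ (exactly the mechanism you describe, with the $c_i$ playing the role of your free functions $g_i$), and then fixes the remaining parameters one at a time in a triangular fashion. Note, though, that the theorem only asks you to zero the maximal and submaximal coefficients --- the descent through all lower degrees that occupies the last part of your proposal belongs to Theorem~\ref{main_theorem}, not here.

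The genuine gap is in your treatment of the $s-n$ non‑framing submaximal coefficients. Your first option --- shrinking $\cN$ so that only the framing‑set submaximal coefficients must vanish --- is not available: $\cN$ is fixed by the statement to have \emph{all} submaximal coefficients zero. Your second option --- adding auxiliary summands ``with principal symbols below the non‑framing submaximal vectors'' --- cannot work: a summand whose principal symbol is strictly below $\vecc{v}$ in $\preceq$ contains no $\p^{\vecc{v}}$ term at all, so it cannot alter the coefficient of $\p^{\vecc{v}}$ in $L-C$. The paper's device is the opposite: for each leftover submaximal vector $\vecc{v}\in S'=S\setminus\{\vecc{v}_1,\dots,\vecc{v}_n\}$ one fixes a maximal vector $\vecc{m}=f(\vecc{v})=\vecc{v}+\vecc{e}_j$ \emph{covering} $\vecc{v}$ and, inside the factorized operator $F_{\vecc{m}}=a_{\vecc{m}}\prod_i(\p_{x_i}+c_i)^{\vecc{m}(i)}$ already present for $\vecc{m}$, replaces one factor $(\p_{x_j}+c_j)$ by $(\p_{x_j}+p_{\vecc{v}})$. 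This leaves the principal symbol $a_{\vecc{m}}\p^{\vecc{m}}$ untouched, and makes the new parameter $p_{\vecc{v}}$ enter the $\vecc{v}$ coefficient linearly (added to an expression in the already‑fixed $c_i$), so it is determined uniquely; approximate flatness guarantees each submaximal vector is covered only by maximal vectors, so no other summands interfere. Without this placement your uniqueness argument for the leftover submaximal layer does not go through.
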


\begin{proof}
Let $M$ be the set of maximal vectors for $\cL$, and let $S$ be the set of submaximal vectors.
Then every vector in $S$ is covered by at least one vector in $M$, and every vector in $M$ covers at least one vector in $S$.
(If a vector such as $k \vecc{e}_1 = ( k,0,0, \dots 0 )$ is maximal, it only covers the one submaximal vector $(k-1) \vecc{e}_1$.)

We will first produce a correspondence between elements of $M$ and subsets of $S$ that has the properties needed to construct expressions in $\cC$.
We may assume that the set of vectors for $\cL$ contains nonzero multiples of all the $\vecc{e}_i$, since we may simply ignore variables whose derivative symbols do not appear in $\cL$.
Since $\cL$ is framed, it has a framing set
$\{ \vecc{v}_1, \vecc{v}_2, \dots \vecc{v}_n \}$. 
We will use this to define a set of $n$ distinguished parameters,
$\{ c_1,c_2, \dots c_n \}$.

Let $S'$ be $S - \{ \vecc{v}_1, \vecc{v}_2, \dots \vecc{v}_n \}$.
Now fix some function $f \colon S' \rightarrow M$ 
which takes every submaximal vector in $S'$ to a maximal vector that covers it.
To each $\vecc{m} \in M$, we associate the set $f^{-1}(\vecc{m})$, the preimage of $\vecc{m}$.
Some sets $f^{-1}(\vecc{m})$ may be empty, but the ones that are not partition $S'$.

We will construct the class $\cC$ as a set of sums of operators, where there will be one operator for each vector in $M$.
For each $\vecc{m} \in M$, the corresponding operator will have principal symbol 
$a_{\vecc{m}} \p^{\vecc{m}}$.  
This guarantees that for $L \in \cL$ and 
$C \in \cC$, all operators of the form 
$N = L - C$ will have coefficients of zero in all terms corresponding to maximal vectors.

For each of the submaximal vectors $\vecc{v}_i$,  we will make sure that the operator for each
maximal vector $\vecc{m} = \vecc{v}_i + \vecc{e}_j$
that covers it has a factor of 
$(\p_{x_j} + c_j)^{\vecc{m}(j)}$.
Looking at some particular $\vecc{v}_i$, only the operators for maximal vectors $\vecc{m}$ that cover $\vecc{v}_i$ will contribute terms in $\cC$
corresponding to the vector $\vecc{v}_i$.
When $\vecc{m} = \vecc{v}_i + \vecc{e}_j$, the term 
contributed by the operator for $\vecc{m}$ will be
$\vecc{m}(j) a_{\vecc{m}} c_j \p^{\vecc{v}_i} = 
(\vecc{v}(i)+1) a_{\vecc{v}+\vecc{e}_i} c_j \p^{\vecc{v_i}}$.
To make the term corresponding to $\vecc{v}_i$ zero in $L - C$, we must have 
$\sum_{j \in T(\vecc{v}_i)} (\vecc{v}(j)+1) a_{\vecc{v}+\vecc{e}_j} c_j = a_{\vecc{v}_i}$,
where $T(\vecc{v}_i)$ is the set of $j$ so that 
$\vecc{v}_i + \vecc{e}_j$ covers $\vecc{v}_i$.
Letting $\vecc{c} = ( c_1,c_2, \dots c_n )$,
this is the equation 
$\phi(\vecc{v}_i) \cdot \vecc{c} = a_{\vecc{v}_i}$.

Then to make all the coefficients of all the $\vecc{v}_i$ terms zero in $L-C$, 
we have $\phi(\vecc{v}_i) \cdot \vecc{c} = a_{\vecc{v}_i}$
for all $i \leq n$.
Since $\{ \vecc{v}_1, \vecc{v}_2, \dots \vecc{v}_n \}$ is a framing set, 
the $n$ vectors $\phi(\vecc{v}_i)$ are linearly independent, and this system has a unique solution for the $c_i$.
The values of the $c_i$ will be fixed by this, so we may henceforth treat them as constants.

Let $M$ be the set of maximal vectors of $\cL$.
We will define an operator $F_\vecc{m}$ for each $\vecc{m} \in M$, and then define $\cC$ to be the class of sums of the form  $\sum_{\vecc{m} \in M} F_\vecc{m}$. 
For any given $\vecc{m} \in M$, $F_\vecc{m}$ will be obtained by slightly modifying 
$a_{\vecc{m}} \prod_{i \leq n} ( \p_{x_i} + c_i )^{\vecc{m}(i)}$ .
Observe that the principal symbol of this expression is already $a_{\vecc{m}} \p^\vecc{m}$, as desired.

Our goal is to choose $\cC$ so that all of the submaximal terms of $\cL$ will also be removed.
The parameters $c_i$ let us remove the $n$ submaximal terms with vectors $v_1, v_2, \dots v_n$.
To remove the terms with vectors in $S' = S - \{ v_1,v_2,\dots v_n \}$, we introduce a parameter $p_\vecc{v}$ for each $v \in S'$.
This parameter will be placed into the operator $F_\vecc{m}$, where $\vecc{m} = f(\vecc{v})$.
Specifically, we will let $j$ be such that $\vecc{m} = \vecc{v} + \vecc{e}_j$, take $a_{\vecc{m}} \prod_{i \leq n} ( \p_{x_i} + c_i )^{\vecc{m}(i)}$, and replace one factor of 
$( \p_{x_j} + c_j )$ in it by $( \p_{x_j} + p_\vecc{v} )$.
In practice, this will be slightly more complicated because a given $F_\vecc{m}$ may have factors $( \p_{x_j} + c_j )$ replaced for several $j$ at once.

Consider any maximal vector $\vecc{m} \in M$. 
Let $S(\vecc{m})$ be the set of indices where  we will replace factors in $a_{\vecc{m}} \prod_{i \leq n} ( \p_{x_i} + c_i )^{\vecc{m}(i)}$,
so $S(\vecc{m}) = \{ i \colon \vecc{m} - \vecc{e}_i \in f^{-1}(\vecc{m}) \}$.
Let $\vecc{u}_{\vecc{m}}$ be the vector $\vecc{m} - \sum_{i \in S(\vecc{m})} \vecc{e}_i$. 
Finally, let the operator $F_{\vecc{m}}$ be
$a_{\vecc{m}} \prod( \p_{x_i} + c_i )^{\vecc{u}_{\vecc{m}}(i)} \: 
\prod_{i \in S(\vecc{m})} (\p_{x_i} + p_{\vecc{m}-\vecc{e}_i})$.
It has principal symbol $a_{\vecc{m}} \p^{\vecc{m}}$, as desired.  
Now we let $\cC$ be given by
\begin{equation} \label{C_equation}
\cC = \sum_{\vecc{m} \in M} F_{\vecc{m}} = 
\sum_{\vecc{m} \in M} a_{\vecc{m}} \prod( \p_{x_i} + c_i )^{\vecc{u}_{\vecc{m}}(i)} 
\prod_{i \in S(\vecc{m})} (\p_{x_i} + p_{\vecc{m}-\vecc{e}_i})
\end{equation}

Given a submaximal vector $\vecc{v}$, we have two cases.
First assume that $\vecc{v} = \vecc{v}_i$ for some $i$.
Then $\vecc{v}$ is not in the domain of $f$, and for any 
maximal vector $\vecc{m}$ covering $\vecc{v}$ where 
$\vecc{m} = \vecc{v} + \vecc{e}_i$, we have $i \notin S(\vecc{m})$.
So for each of these $\vecc{m}$ we have $\vecc{u}_{\vecc{m}}(i) = \vecc{m}(i)$, and
$F_\vecc{m} = a_
\vecc{m} (\p_{x_i} + c_i )^{{\vecc{m}}(i)} G$, where $\p_{x_i}$ does not appear in $G$ and $G$ has principal symbol $\p^{\vecc{v}-\vecc{v}(i) \vecc{e}_i}$.
Then the $\vecc{v}$ term in this $F_\vecc{m}$ is 
$a_\vecc{m} \vecc{m}(i) c_i \p^\vecc{v} = (\vecc{v}(i)+1) a_{\vecc{v}+\vecc{e}_i} c_j \p^{\vecc{v_i}}$, as desired, and summing these over all 
maximal vectors $\vecc{m} \succ \vecc{v} = \vecc{v}_i$ gives us the equation $\phi(\vecc{v}_j) \cdot \vecc{c} = a_{\vecc{v}_j}$, which is true by our choice of the $c_j$.

For the second case, assume that $\vecc{v}$ is not one of the $\vecc{v}_i$.
If $\vecc{m} = \vecc{v} + \vecc{e}_j = f(\vecc{v})$,
then $j \in S(\vecc{m})$ and $F_\vecc{m}$ is
$a_\vecc{m} (\p_{x_j} + c_j)^{\vecc{v}(j)} (p_{x_j} + p_\vecc{v}) G$, where
$\p_{x_j}$ does not appear in $G$ and $G$ has principal symbol $\p^{\vecc{v}-\vecc{v}(j) \vecc{e}_j}$.
So this $F_\vecc{m}$ has $\vecc{v}$ term
$a_\vecc{m} (\vecc{v}(j) c_j + p_\vecc{v})$.
If $\vecc{m} \neq f(\vecc{v})$, then the calculation in the previous paragraph gives us that the $\vecc{v}$ term in this $F_\vecc{m}$ is 
$a_\vecc{m} \vecc{m}(i) c_i \p^\vecc{v}$, a term that does not contain $p_\vecc{v}$.
Summing over all maximal vectors $\vecc{m}$ above $\vecc{v}$, we get that the coefficient of the $\vecc{v}$ term in $\cC$ is 
$a_\vecc{m} (\vecc{v}(j) c_j + p_\vecc{v})$ plus an expression in which $p_\vecc{v}$ does not appear.
There is a unique choice of $p_\vecc{v}$ that makes this equal to $a_\vecc{v}$.

\end{proof}

The above theorem gives us that all of the coefficients of maximal terms in the unique $N \in \cN$ are invariants.
By the construction in the proof, each of these coefficients depends only on coefficients of maximal and submaximal terms of $\cL$, and is thus an upward invariant.

\begin{theorem} \label{main_theorem}
Let $\cL$ be maximally generated, framed and approximately flat.
Then $\cL$ has a complete set of invariants.
\end{theorem}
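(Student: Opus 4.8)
The plan is to verify the four hypotheses of Theorem~\ref{sufficient_for_a_complete_set_theorem} and then quote it. Since $\cL$ is maximally generated, approximately flat and framed, Theorem~\ref{enough_top_invariants_theorem} already delivers all the maximal invariants, exactly $s-n$ extra invariants, and $n(n-1)/2$ compatibility invariants; this settles conditions~(1)--(3). So the whole task reduces to condition~(4): producing an upward invariant for every term of $\cL$ that is neither maximal nor submaximal. I would obtain these by iterating Theorem~\ref{unique_C_theorem}, peeling the (finite) poset of terms one degree at a time, exactly in the style of Examples~\ref{simple example}, \ref{complicated example}, \ref{X3example} and~\ref{3d hyperbolic example}.

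The base of the iteration is Theorem~\ref{zero_max_and_submax_theorem}: it gives a gauge-closed class $\cC_0$ so that each $L\in\cL$ has a unique $C_0\in\cC_0$ with $N_0=L-C_0$ in the class $\cN_0$ of operators of $\cL$ whose maximal and submaximal coefficients all vanish; by Theorem~\ref{unique_C_theorem} and the remark following Theorem~\ref{zero_max_and_submax_theorem}, the coefficients of the (now) maximal terms of $N_0$ are upward invariants for the highest-degree non-(sub)maximal terms. For the inductive step, suppose we have a gauge-closed $\cC_k$ and the class $\cN_k$ of operators of $\cL$ vanishing on all maximal, submaximal and already-treated terms, together with a unique $C_k\in\cC_k$ for which $L-C_k\in\cN_k$. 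Let $\ell$ be the next degree down. For every term $a_{\vecc{v}}\p^{\vecc{v}}$ with $|\vecc{v}|=\ell$ I would choose, as in the proof of Theorem~\ref{zero_max_and_submax_theorem}, an index $i=i(\vecc{v})$ with $\vecc{v}+\vecc{e}_{i}$ a term of $\cL$, group the $\vecc{v}$'s by the value $\vecc{w}=\vecc{v}+\vecc{e}_{i(\vecc{v})}$, and for each such $\vecc{w}$ adjoin to $\cC_k$ a new summand built only from first-order factors $(\p_{x_j}+c)$ with previously fixed, invariant coefficients and one fresh parameter $p_{\vecc{v}}$ per $\vecc{v}$ in the group, placed so that $p_{\vecc{v}}$ first appears in the $\vecc{v}$-slot and then only strictly below it. Building the new summands out of $(\p_{x_j}+c)$-factors keeps the enlarged class $\cC_{k+1}$ closed under gauge transformations, and the equations forcing the degree-$\ell$ coefficients of $L-C_{k+1}$ to vanish form a triangular system for the $p_{\vecc{v}}$ with nonzero diagonal entries (a maximal coefficient, as in the displayed formulas of that proof), hence are uniquely solvable. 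Thus $\cC_{k+1}$ again satisfies the hypothesis of Theorem~\ref{unique_C_theorem}, the maximal coefficients of $L-C_{k+1}$ are upward invariants for the degree-$\ell$ terms, and each involves only strictly higher coefficients (those were the only ones still undetermined). Since the term poset is finite the iteration terminates; all non-(sub)maximal terms acquire upward invariants, condition~(4) of Theorem~\ref{sufficient_for_a_complete_set_theorem} holds, and that theorem finishes the proof.

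The genuine difficulty, and the only place the argument is not formal, is the inductive construction of $\cC_{k+1}$. A summand made of first-order factors that reaches a degree-$\ell$ term necessarily has principal symbol at a strictly higher degree, so adjoining it perturbs coefficients of terms that were already zeroed; one must then re-solve the parameters introduced at earlier stages to absorb this perturbation, and check that this re-solution is itself unique --- which it is, because those parameters were determined by triangular or by invertible systems, for instance the framing system $\phi(\vecc{v}_i)\cdot\vecc{c}=a_{\vecc{v}_i}$ from the proof of Theorem~\ref{zero_max_and_submax_theorem}. In practice this forces one to reuse higher-level parameters inside the new summands, as with the factor $(\p_x+h)(\p_y+r)^2$ in Example~\ref{complicated example}, and one must verify that such reuse does not destroy gauge-closure (the wrong reuse would). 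Making precise the bookkeeping of ``which factor of which summand carries which parameter'', generalizing the map $f\colon S'\to M$ of that proof all the way down the poset, and checking that at each degree there are exactly as many fresh parameters available as there are terms to be zeroed, is the combinatorial heart of the matter; everything else is the machinery of Theorems~\ref{unique_C_theorem}, \ref{enough_top_invariants_theorem}, \ref{zero_max_and_submax_theorem} and~\ref{sufficient_for_a_complete_set_theorem}, which is already in place.
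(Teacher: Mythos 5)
Your overall architecture matches the paper's exactly: conditions (1)--(3) of Theorem~\ref{sufficient_for_a_complete_set_theorem} come from Theorem~\ref{enough_top_invariants_theorem}, and condition (4) is to be obtained by iterating Theorem~\ref{unique_C_theorem} starting from the class $\cC_m$ built in Theorem~\ref{zero_max_and_submax_theorem}. But you stop precisely where the real work begins: you correctly name the perturbation problem (any summand made of first-order factors that reaches a degree-$\ell$ slot has principal symbol one degree higher, so it disturbs coefficients already zeroed) and the gauge-closure problem, and then declare the bookkeeping to be ``the combinatorial heart of the matter'' without carrying it out. Worse, the repair you sketch --- re-solving the parameters introduced at earlier stages after each new summand is adjoined --- is not what the paper does and is the harder road: you would need to show the re-solved values still depend only on coefficients of terms at or above the levels already treated (otherwise the resulting invariants are not upward), and re-verify uniqueness of the whole coupled system at every stage.

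The paper closes the gap with two devices you are missing. First, for each vector $\vecc{w}$ that is neither maximal nor submaximal it fixes a single operator $B_\vecc{w}=(\p_{x_j}+q_\vecc{w})\prod_{i\le n}(\p_{x_i}+c_i)^{\vecc{w}(i)}$, where $f(\vecc{w})=\vecc{w}+\vecc{e}_j$ is a chosen non-maximal cover and the $c_i$ are the already-fixed framing parameters; each $B_\vecc{w}$ carries exactly one fresh parameter $q_\vecc{w}$, which enters first in the $\vecc{w}$-slot with coefficient $1$ and otherwise only strictly below. Second, instead of re-solving, the paper determines the element of $\cC_m$ from the shifted operator $L'=L-\sum_{\vecc{w}}\p^{f(\vecc{w})}$, which lies in $\cL$ because $\cL$ is maximally generated and no $f(\vecc{w})$ is maximal; since $\sum_\vecc{w} B_\vecc{w}$ agrees with $\sum_\vecc{w}\p^{f(\vecc{w})}$ on maximal and submaximal terms, this one-shot choice already absorbs the perturbation there. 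The remaining parameters $q_\vecc{w}$ are then solved one at a time in decreasing order along a linear extension of $\prec$, each equation having the form $q_\vecc{w}+(\text{already determined})=a_\vecc{w}$. That triangularization is what makes the choice unique and each resulting coefficient a genuine upward invariant; without it, or an equivalent substitute, your induction is a plan rather than a proof.
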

\begin{proof}
By Theorem \ref{enough_top_invariants_theorem}, $\cL$ has enough maximal, extra and compatibility invariants.  
We may choose particular extra and compatibility invariants for definiteness.
It remains to produce upward invariants for all terms of $\cL$ that are not maximal or submaximal.


Using the fact that $\cL$ is framed and the construction in Theorem \ref{zero_max_and_submax_theorem} there is a class $\cC_{m}$, containing operators 
which uniquely determine the parameters $\{ c_1, c_2, \dots c_n \}$, where each $c_i$ only appears in factors of $(\p_{x_i} + c_i)$.
Furthermore, the class $\cC_m$ is such that for all $L \in \cL$ there is a unique element of $\cC_m$ so that subtracting it from $L$ makes all of its maximal and submaximal terms zero.

Let $V$ be the set of vectors that are not maximal or submaximal.
Fix a function $f$ that takes each vector in $V$ to a vector that covers it, noting that there are no maximal vectors in the range of $f$.
For each $\vecc{v} \in V$, define the operator $B_\vecc{v}$ to be 
$(\p_{x_j} + q_\vecc{v}) \prod_{i \leq n} (\p_{x_i} + c_i)^{\vecc{v}(i)}$, 
where $j$ is such that $f(\vecc{v}) = \vecc{v} + \vecc{e}_j$ and $q_\vecc{v}$ is a free parameter.

The idea is that $B_\vecc{v}$ is designed so that having it as a summand of $\cC$ makes it so that there is a unique choice of $q_\vecc{v}$ that makes the $\vecc{v}$ term in $\cN = \cL - \cC$ zero.
Observe that the $\vecc{v}$ term of $B_\vecc{v}$ is $q_\vecc{v} \p^\vecc{v}$, and that the other terms containing $q_\vecc{v}$ are all below the $\vecc{v}$ term.
Note also that things are complicated by the fact that the principal symbol of each $B_\vecc{v}$ is $\p^\vecc{w}$ for some $\vecc{w}$ that covers $\vecc{v}$ in $\prec$.
To remove the $\vecc{v}$ term from $\cL$, we have to add a term $\p^\vecc{w}$ where $\vecc{w} \succ \vecc{v}$.

Since $\cL$ is maximally generated, we have that it is closed under adding operators of the form $B_\vecc{v}$, since these are not defined for maximal vectors $\vecc{v}$.

Now let a vector $\vecc{v}$ be given that is not maximal or submaximal in $\cL$.
We construct an upward invariant for it as follows.
Let $\cC$ be the class of all operators produced by adding an operator of the form $\sum_{\vecc{w} \succ \vecc{v}} B_\vecc{w}$ to an operator in $\cC_m$.
Let $\cN$ consist of the operators in $\cL$ that have all of their maximal and submaximal terms zero, and where also all of their terms with vectors above $\vecc{v}$ are zero.
Then $\vecc{v}$ is a maximal vector of $\cN$, and will be an upward invariant for the $\vecc{v} $ term, provided that there is a unique way to chose the parameters in $C \in \cC$ so that $L-C$ is in $\cN$.

To demonstrate that this choice is unique, we will show how to structure things so that the value of each parameter is forced.
Let $W$ be the set of vectors $\vecc{w}$ where we are adding the operator $B_\vecc{w}$ to $\cC_m$, so $W = \{ \vecc{w} \in V \colon \vecc{w} \succ \vecc{v} \}$.
Let $L \in \cL$ be given, and let $L' = L - \sum_{\vecc{w} \in W} \p^{f(\vecc{w})}$.
Then $L'$ is in $L$.

As in the proof of Theorem \ref{zero_max_and_submax_theorem}, we have unique choices of the $c_i$ and  $p_\vecc{u}$ appearing in $\cC_m$ so that subtracting this element of $\cC_m$ from $L'$ makes all of its maximal and submaximal terms zero.
Fix these choices, giving a particular element $C_m$ of $\cC_m$, so we have that $L' - C_m = L - (C_m + \sum_{\vecc{w} \in W} \p^{f(\vecc{w})}) $ 
has values of zero for all the terms corresponding to those that are maximal or submaximal in $\cL$.
Since $\sum_{\vecc{w} \in W} \p^{f(\vecc{w})}$ agrees with $\sum_{\vecc{w} \succ \vecc{v}} B_\vecc{w}$ on maximal and submaximal terms of $\cL$, we also have that 
$L - (C_m + \sum_{\vecc{w} \succ \vecc{v}} B_\vecc{w})$ is zero on these terms.

We will now pick the values of the parameters $q_\vecc{w}$ in the $B_\vecc{w}$, being careful about the order in which we do this.
Extend the partial order $\succ$ to a total order $>$ on vectors in $W$, and consider elements of $W$ in decreasing order relative to $>$.
At each step, we will choose the value of $q_\vecc{w}$ to make the $\vecc{w}$ term in $L - (C_m + \sum_{\vecc{w} \succ \vecc{v}} B_\vecc{w})$ equal to zero.

Looking at any $B_\vecc{w} = (\p_{x_j} + q_\vecc{w}) \prod_{i \leq n} (\p_{x_i} + c_i)^{\vecc{w}(i)}$,
we see that it contains only the one undetermined function $q_\vecc{w}$, since all of the $c_i$ have already been fixed.
We also have that $q_\vecc{w}$ only appears in terms with vectors $\vecc{w}$ and below.

Let $\vecc{w}$ be any element of $W$, and consider the step in the process where we choose $q_\vecc{w}$.
The $\vecc{w}$ term of $L - (C_m + \sum_{\vecc{w} \succ \vecc{v}} B_\vecc{w})$
may contain the $c_i$ and $p_\vecc{u}$, but those are already fixed.
It may also contain $q_{\vecc{w}'}$ for ${\vecc{w}'} \succ \vecc{w}$, but then ${\vecc{w}'} > \vecc{w}$, and $q_{\vecc{w}'}$ has already been determined.
Since the coefficient of the $\vecc{w}$ term consists of $q_\vecc{w}$ added to an expression that does not contain $q_\vecc{w}$, there is a unique value of $q_\vecc{w} $ that makes this term zero.
So our construction works, and produces an upward invariant for every term that is not maximal or submaximal.

Adding these invariants to the others gives a set of invariants that is complete by Theorem \ref{sufficient_for_a_complete_set_theorem}.
\end{proof}

\section*{Acknowledgement}
This material is based upon work supported by the National Science Foundation under
grant No.1708033.


%


\end{document}